\documentclass[english]{article}
\usepackage[T1]{fontenc}
\usepackage[latin1]{inputenc}
\usepackage{geometry}
\geometry{verbose,tmargin=2cm,bmargin=2cm,lmargin=2cm,rmargin=2cm}
\usepackage{float}
\usepackage{mathrsfs}
\usepackage{amsmath}
\usepackage{amssymb}
\usepackage{graphicx}
\usepackage{subfig}
\usepackage{colortbl} % Required for coloring the table cells
\usepackage{hhline} % Required for better horizontal lines

\definecolor{Gray}{gray}{0.85}
\makeatletter
%%%%%%%%%%%%%%%%%%%%%%%%%%%%%% LyX specific LaTeX commands.
\floatstyle{ruled}
\newfloat{algorithm}{tbp}{loa}
\providecommand{\algorithmname}{Algorithm}
\floatname{algorithm}{\protect\algorithmname}

%%%%%%%%%%%%%%%%%%%%%%%%%%%%%% User specified LaTeX commands.
%%%%%%%%%%%%%%%%%%%%%%%%%%%%%% LyX specific LaTeX commands.
%% Because html converters don't know tabularnewline

%%%%%%%%%%%%%%%%%%%%%%%%%%%%%% Textclass specific LaTeX commands

\usepackage{amsthm}

\usepackage{mathrsfs}

\usepackage{amsfonts}

\usepackage{epsfig}

\usepackage{bm}

\usepackage{mathrsfs}

\usepackage{enumerate}

\@ifundefined{definecolor}{\@ifundefined{definecolor}
 {\@ifundefined{definecolor}
 {\usepackage{color}}{}
}{}
}{}

\usepackage{hyperref}
\hypersetup{
    colorlinks=true,
    linkcolor=blue,
    filecolor=magenta,      
    urlcolor=cyan,
    citecolor = red,
}

\usepackage{subfig}\usepackage[all]{xy}

\newtheorem{lem}{Lemma}[section]
\newtheorem{rem}{Remark}[section]
\newtheorem{prop}{Proposition}[section]

\newcounter{hypA}

\newcounter{hypB}

\newcounter{hypD}
\newenvironment{hypD}{\refstepcounter{hypD}\begin{itemize}
 \item[({\bf D\arabic{hypD}})]}{\end{itemize}}

\usepackage{babel}\date{}

\usepackage{babel}

\makeatother

\usepackage{babel}

\begin{document}

%+Title
\begin{center}

{\Large \textbf{Unbiased and Multilevel Methods for a Class of Diffusions Partially Observed via Marked Point Processes}}

\vspace{0.5cm}

BY MIGUEL ALVAREZ, AJAY JASRA \& HAMZA RUZAYQAT

{\footnotesize Applied Mathematics and Computational Science Program, 
Computer, Electrical and Mathematical Sciences and Engineering Division, King Abdullah University of Science and Technology, Thuwal, 23955, KSA.}
{\footnotesize E-Mail:\,} \texttt{\emph{\footnotesize miguelangel.alvarezballesteros@kaust.edu.sa, ajay.jasra@kaust.edu.sa, hamza.ruzayqat@kaust.edu.sa}}
\end{center}

\begin{abstract}
In this article we consider the filtering problem associated to partially observed diffusions, with observations following a marked point process.
In the model, the data form a point process with observation times that have its intensity driven by a diffusion, with the associated marks also depending upon the diffusion process.
We assume that one must resort to time-discretizing the diffusion process and develop particle and multilevel particle filters to recursively approximate the filter.
In particular, we prove that our multilevel particle filter can achieve a mean square error (MSE) of $\mathcal{O}(\epsilon^2)$ ($\epsilon>0$ and arbitrary) with a cost
of $\mathcal{O}(\epsilon^{-2.5})$ versus using a particle filter which has a cost of $\mathcal{O}(\epsilon^{-3})$ to achieve the same MSE. We then show how this methodology can be extended to give unbiased (that is with no time-discretization error) estimators of the filter, which are proved to have finite variance and with high-probability have finite cost. Finally, we extend our methodology to the problem of online static-parameter estimation.\\
\\
\noindent \textbf{Key words}: Unbiased Methods, Multilevel Monte Carlo, Non-Linear Filtering, Point Processes, Parameter Estimation.
\\ 
\noindent \textbf{MSC classes}:	60G55, 60G35, 62M20, 62F30

\noindent\textbf{Code available at:} \href{https://github.com/maabs/Multilevel-for-Diffusions-Observed-via-Marked-Point-Processes}{https://github.com/maabs/Multilevel-for-Diffusions-Observed-via-Marked-Point-Processes}\\
\noindent\textbf{Corresponding author}: Miguel Alvarez. E-mail:
\href{mailto:miguelangel.alvarezballesteros@kaust.edu.sa}{miguelangel.alvarezballesteros@kaust.edu.sa}
\end{abstract}

\section{Introduction}

The class of partially observed diffusion processes can be found in wide variety of real applications; see for instance the coverage in \cite{crisan_bain,delm:2004}. In this paper we consider observation models which constitute marked point processes and in particular, when the points (observation times) are a non-homogeneous Poisson process with intensity dependent upon the diffusion; this has applications for instance in earthquake modeling see \cite{fearn,jin,martin}. A related problem is that of trying to also, simultaneously, estimate the static (time-homogeneous) parameters of the model and this is also considered in this article.

The filtering of partially observed diffusion processes is notoriously challenging as even if the transition dynamics or density of the diffusion is available, one must resort to numerical approximation of the filter. Such approaches, at least for low-dimensional diffusions, are often focussed upon the particle filter methodology; see \cite{crisan_bain,delm:2004,fearn,mlpf} for example. In many cases, one is unable to exactly sample from the transition of the diffusion or its transition density is unavailable even up-to a non-negative and unbiased estimator (see \cite{fearn} for diffusions where this is possible) and so often one has to resort to using a time-discretization of the diffusion.

In the scenario where one considers approximating the time-discretized filter as well as estimating the static parameters, there have been several works including \cite{archibald, beskos,del_jacod,jin,mlpf,mlpf_nc,mlpf_new} to name but a few; as just noted these methods will have bias in terms of the time-discretization. All of these articles depend, in some way, on the particle filter method. This approach will generate a collection of samples in parallel which undergo transitions using the dynamics of the discretized diffusion and then are weighted and resampled to approximate the filter; see \cite{crisan_bain,delm:2004}. This has been further extended in the works of \cite{beskos,mlpf,mlpf_nc,mlpf_new} to incorporate the multilevel Monte Carlo (MLMC) method; see \cite{giles,giles1,hein} for the initial articles and \cite{ml_rev} for a review when using MLMC in data problems. This technique relies upon simulating approximations of the filters at multiple different levels of time-discretizations. The approach in  \cite{mlpf,mlpf_nc,mlpf_new} is to use a coupling of the time-discretized dynamics and of the resampling operation to give a type of coupled particle filter. It has been shown to reduce the computational cost to achieve a pre-specified mean square error (MSE) versus using an ordinary particle filter.
\cite{beskos} consider the problem of on-line parameter estimation for a type of partially observed diffusion process, using particle filters. 

To the best of our knowledge, none of the above papers consider the observation model that is under-study in this paper, with the exception of \cite{fearn,jin}. The work of \cite{fearn} focusses upon using the exact simulation of diffusions methodology to produce unbiased estimators of the filter. As is well-known, such methodology rarely works in dimensions bigger than 1 and often places significant restrictions on the diffusions that can be used, albeit being very elegant and useful in the scenarios where it can be adopted. The approach of this paper really only needs the strong error of the 
 method to fall at a fast enough rate; often this means only that the drift and diffusion coefficients need Lipschitz and growth conditions; see \cite{kloeden}. The method of \cite{jin}, which provides unbiased inference (removes time discretization bias), relies both on the exact simulation of the diffusion, which again is a restricted class of problem, for which there is a transition density and the well-known Poisson estimator \cite{wagner}, which in some examples can have a substantial variance. In addition, the authors often have to truncate their estimator to yield unbiased estimators as there is a (small) probability that the estimator of the exponential of the integrated intensity is negative. 
\cite{jin} also extend their ideas in the context of Markov chain Monte Carlo and batch (fixed data set) static Bayesian parameter estimation; the approach in this article can be extended to performing either a multilevel parameter estimation using Euler-Maruyama scheme as in \cite{jasra_bpe_sde}, or using Runge-Kutta schemes as in \cite{del_Bayes} or unbiased Bayesian parameter estimation for diffusions driven by a Brownian motion \cite{chada_ub} or diffusions driven by L\'{e}vy processes \cite{bayes_levy}, with the ideas of the afore-mentioned references, although we do not do so.
Whilst both of the methods in \cite{fearn,jin} clearly have several interesting contributions, they are not as general as the approach that we will develop.

In this article we apply and analyze methodology for  the filtering problem associated to partially observed diffusions, with observations following a marked point process. In particular,
we apply the multilevel particle filter (MLPF) \cite{mlpf} and unbiased particle filter (UPF) \cite{ub_pf} adapted to these models. The latter method is able to approximate the filter with no time-discretization bias, whilst only working with time-discretized dynamics. We also apply the ideas from \cite{beskos} to provide a method which can estimate the static (time-homogeneous) parameters of the model sequentially in time (online static parameter estimation) via maximum likelihood approach. To summarize, the contributions of this paper are as follows:
\begin{itemize}
\item{We develop the MLPF and UPF for the model under study.}
\item{We prove that the MLPF estimator can achieve a mean square error (MSE) of $\mathcal{O}(\epsilon^2)$, for arbitrary $\epsilon>0$, at a cost that is $\mathcal{O}(\epsilon^{-2.5})$.
If one uses a particle filter, then the cost is $\mathcal{O}(\epsilon^{-3})$ to achieve the same MSE.}
\item{The afore-mentioned technical results also allow us to show that our UPF estimator is unbiased, of finite variance and with high-probability (see e.g.~\cite{rhee}) has finite cost.}
\item{We adapt the approach in \cite{beskos} for online static parameter estimation.}
\item{We verify our findings by  implementing the methodology on several examples.}
\end{itemize}
%We remark that for the second bullet point, we assume that the diffusion coefficient is non-constant; if it is constant the cost improves to $\mathcal{O}(\epsilon^{-2}\log(\epsilon)^2)$.

This paper is structured as follows. In Section \ref{sec:model} we discuss the model and our associated algorithms for filtering, multilevel filtering and unbiased filtering as well as for online parameter estimation. In Section \ref{sec:math} we present our mathematical results. In Section \ref{sec:numerics} we provide several numerical examples that illustrate our algorithms and mathematical analysis. The proofs for our mathematical results can be found in the appendix.

\section{Modeling and Algorithms}\label{sec:model}

\subsection{Notations}\label{sec:notation}

Let $(\mathsf{X},\mathcal{X})$ be a measurable space.
For $\varphi:\mathsf{X}\rightarrow\mathbb{R}$ we write $\mathcal{B}_b(\mathsf{X})$ as the collection of bounded measurable functions. 
Let $\varphi:\mathbb{R}^d\rightarrow\mathbb{R}$, $\textrm{Lip}_{\|\cdot\|_2}(\mathbb{R}^{d})$ denotes the collection of real-valued functions that are Lipschitz with respect to (w.r.t.)~$\|\cdot\|_2$ ($\|\cdot\|_p$ denotes the $\mathbb{L}_p-$norm of a vector $x\in\mathbb{R}^d$). That is, $\varphi\in\textrm{Lip}_{\|\cdot\|_2}(\mathbb{R}^{d})$ if there exists a $C<+\infty$ such that for any $(x,y)\in\mathbb{R}^{2d}$
$$
|\varphi(x)-\varphi(y)| \leq C\|x-y\|_2.
$$
We write $\|\varphi\|_{\textrm{Lip}}$ as the Lipschitz constant of a function $\varphi\in\textrm{Lip}_{\|\cdot\|_2}(\mathbb{R}^{d})$.
For $\varphi\in\mathcal{B}_b(\mathsf{X})$, we write the supremum norm $\|\varphi\|=\sup_{x\in\mathsf{X}}|\varphi(x)|$.
$\mathcal{P}(\mathsf{X})$  denotes the collection of probability measures on $(\mathsf{X},\mathcal{X})$.
For a measure $\mu$ on $(\mathsf{X},\mathcal{X})$
and a function $\varphi\in\mathcal{B}_b(\mathsf{X})$, the notation $\mu(\varphi)=\int_{\mathsf{X}}\varphi(x)\mu(dx)$ is used. 
$B(\mathbb{R}^d)$ denote the Borel sets on $\mathbb{R}^d$. $dx$ is used to denote the Lebesgue measure.
%For $(\mathsf{X}\times\mathsf{Y},\mathcal{X}\vee\mathcal{Y})$ a measurable space and $\mu$ a non-negative measure on this space,
%we use the tensor-product of functions notation for $(\varphi,\psi)\in\mathcal{B}_b(\mathsf{X})\times\mathcal{B}_b(\mathsf{X})$,
%$\mu(\varphi\otimes\psi)=\int_{\mathsf{X}\times\mathsf{Y}}\varphi(x)\psi(y)\mu(d(x,y))$.
If $K:\mathsf{X}\times\mathcal{X}\rightarrow[0,\infty)$ is a non-negative operator and $\mu$ is a measure, we use the notations
$
\mu K(dy) = \int_{\mathsf{X}}\mu(dx) K(x,dy)
$
and for $\varphi\in\mathcal{B}_b(\mathsf{X})$, 
$
K(\varphi)(x) = \int_{\mathsf{X}} \varphi(y) K(x,dy).
$
%For a sequence of Markov kernels $K_1,\dots,K_n$ we write 
%$$
%K_{1:n}(x_0,dx_n) = \int_{\mathsf{X}^{n-1}}\prod_{p=1}^n K_p(x_{p-1},dx_p) .
%$$
%For $\mu,\nu\in\mathcal{P}(\mathsf{X})$, the total variation distance 
%is written $\|\mu-\nu\|_{\textrm{tv}}=\sup_{A\in\mathcal{X}}|\mu(A)-\nu(A)|$.
For $A\in\mathcal{X}$, the indicator function is written as $\mathbb{I}_A(x)$.
%$\mathcal{U}_A$ denotes the uniform distribution on the set~$A$. 
$\mathcal{N}_s(\mu,\Sigma)$ (resp.~$\psi_s(x;\mu,\Sigma)$)
denotes an $s-$dimensional Gaussian distribution (density evaluated at $x\in\mathbb{R}^s$) of mean $\mu$ and covariance $\Sigma$. If $s=1$ we omit the subscript $s$. For a vector/matrix $X$, $X^*$ is used to denote the transpose of $X$.
For $A\in\mathcal{X}$, $\delta_A(du)$ denotes the Dirac measure of $A$, and if $A=\{x\}$ with $x\in \mathsf{X}$, we write $\delta_x(du)$. 
For a vector-valued function in $d-$dimensions (resp.~$d-$dimensional vector), $\varphi(x)$ (resp.~$x$) say, we write the $i^{\textrm{th}}-$component ($i\in\{1,\dots,d\}$) as $\varphi^{(i)}(x)$ (resp.~$x^{(i)}$). For a $d\times q$ matrix $x$, we write the $(i,j)^{\textrm{th}}-$entry as $x^{(ij)}$.
%For $\mu\in\mathcal{P}(\mathsf{X})$ and $X$ a random variable on $\mathsf{X}$ with distribution associated to $\mu$, we use the notation $X\sim\mu(\cdot)$. 
%For a finite set $A\in\mathcal{X}$, we write $\textrm{Card}(A)$ as the cardinality of $A$.

\subsection{Model}\label{sec:mod}

We consider data $Y_{s_1},Y_{s_2},\dots$, $Y_{s_k}\in\mathsf{Y}\subseteq\mathbb{R}^{d_y}$, $k\in\mathbb{N}$, that are observed at the event times $s_1,s_2,\dots$, $s_k\in\mathbb{R}^+$ of a non-homogenous Poisson process driven by a diffusion. More precisely, consider a diffusion process
\begin{equation}\label{eq:diff}
dX_t = b(X_t)dt + \sigma(X_t)dW_t
\end{equation}
where $X_0=x_{*}\in\mathbb{R}^{d_x}$ given, $b:\mathbb{R}^{d_x}\rightarrow\mathbb{R}^{d_x}$, $\sigma:\mathbb{R}^{d_x}\rightarrow\mathbb{R}^{d_x\times d_x}$ is non-constant (except in some of our numerical examples) and $\{W_t\}_{t\geq 0}$ is a standard Brownian motion. At this stage we do not mention any static parameters; this is considered in Section \ref{sec:online}.
To minimize certain technical difficulties, the following assumption is made throughout the paper:
\begin{hypD}\label{hyp_diff:1}
We have:
\begin{enumerate}
\item{$\sigma^{(ij)}$ is bounded with $\sigma^{(ij)}\in\textrm{Lip}_{\|\cdot\|_2}(\mathbb{R}^{d_x})$, for all $(i,j)\in\{1,\dots,d_x\}^2$ and 
$$
a(x):=\sigma(x)\sigma(x)^*
$$ 
is uniformly elliptic for all $x\in\mathbb{R}^{d_x}$.}
\item{$b^{(j)}$ are bounded and $b^{(j)}\in\textrm{Lip}_{\|\cdot\|_2}(\mathbb{R}^{d_x})$, for all $j\in\{1,\dots,d_x\}$.}
\end{enumerate}
\end{hypD}
Note that the assumptions here and later are far from minimal. In general, as stated in the introduction, it should be enough that the drift and diffusion coefficients possess Lipschitz and growth conditions.

We then consider a non-homogeneous Poisson process with intensity function $\lambda:\mathbb{R}^{d_x}\rightarrow[\underline{c},\overline{c}]$, $0<\underline{c}<\overline{c}<+\infty$. That is to say that, conditional on the process $\{X_r\}_{r\in[0,t]}$, the joint density of $n_t\in\mathbb{N}$ event times $s_1,\dots,s_{n_t}$ is precisely
$$
\left(\prod_{k=1}^{n_t} \lambda(x_{s_k})\right)\exp\left\{-\int_{0}^t \lambda(x_r)dr\right\}.
$$
The observations, conditional upon the event times $s_{1:n_t}$ and the diffusion process  $\{X_r\}_{r\in[0,t]}$, have joint density
$$
p(y_{s_1:s_{n_t}}|s_{1:n_t},\{x_r\}_{r\in[0,t]}) = \prod_{k=1}^{n_t} g(x_{s_k},y_{s_k})
$$
where, for any $x\in\mathbb{R}^{d_x}$, $g(x,\cdot)$ is a probability density on $\mathsf{Y}$.

Our objective is to do filtering at some finite collection of times. For now, consider $t\in\mathbb{R}^+$ given, then, for $\varphi\in\mathcal{B}_b(\mathbb{R}^{d_x})$, we want to compute
\begin{equation}\label{eq:cont_filter}
\pi_t(\varphi) := \frac{
\mathbb{E}\left[\varphi(X_t)\left(\prod_{k=1}^{n_t} g(X_{s_k},y_{s_k})\lambda(X_{s_k})\right)
\exp\left\{-\int_{0}^t \lambda(X_r)dr\right\}
\right]
}{\mathbb{E}\left[\left(\prod_{k=1}^{n_t} g(X_{s_k},y_{s_k})\lambda(X_{s_k})\right)
\exp\left\{-\int_{0}^t \lambda(X_r)dr\right\}
\right]},
\end{equation}
where the expectation is w.r.t.~the law of the diffusion process given in \eqref{eq:diff}. We will make the constraint that this computation is of interest at times $t\in\mathbb{N}$.

\subsection{Discretization}

In practice, it is typically difficult (or impossible) to work directly with \eqref{eq:diff} and \eqref{eq:cont_filter}. Often one works with a time discretization of \eqref{eq:diff}, of which we adopt the Euler-Maruyama approximation, with $\Delta_l=2^{-l}$, $l\in\mathbb{N}_0$:
\begin{equation}\label{eq:euler}
\widetilde{X}_{(k+1)\Delta_l} = \widetilde{X}_{k\Delta_l} + b(\widetilde{X}_{k\Delta_l}) + \sigma(\widetilde{X}_{k\Delta_l})[W_{(k+1)\Delta_l}-W_{k\Delta_l}]
\end{equation}
where $k\in\mathbb{N}_0$ and $\widetilde{X}_0=x_{*}$. We make the standard extension that if $t\in[k\Delta_l,(k+1)\Delta_l)$ then
\begin{equation}\label{eq:inbetween_euler}
\widetilde{X}_{t}=\widetilde{X}_{k\Delta_l}+(\widetilde{X}_{(k+1)\Delta_l}-\widetilde{X}_{k\Delta_l})\Delta_l^{-1}(t-k\Delta_l).
\end{equation}
This is clearly needed given the representation \eqref{eq:cont_filter} that we seek to approximate.

Given the Euler approximation \eqref{eq:euler}, we can then consider the approximation, for $t\in\mathbb{N}$, given by
\begin{equation}\label{eq:cont_filter_euler}
\pi_t^l(\varphi) := \frac{
\mathbb{E}\left[\varphi(X_t)\left(\prod_{k=1}^{n_t} g(\widetilde{X}_{s_k},y_{s_k})\lambda(\widetilde{X}_{s_k})\right)
\exp\left\{-\Delta_l\sum_{k=0}^{t\Delta_l^{-1}-1}\lambda(\widetilde{X}_{k\Delta_l})\right\}
\right]
}{\mathbb{E}\left[\left(\prod_{k=1}^{n_t} g(\widetilde{X}_{s_k},y_{s_k})\lambda(\widetilde{X}_{s_k})\right)
\exp\left\{-\Delta_l\sum_{k=0}^{t\Delta_l^{-1}-1}\lambda(\widetilde{X}_{k\Delta_l})\right\}
\right]},
\end{equation}
where the expectation is taken w.r.t.~the law associated to the Euler approximation \eqref{eq:euler}.

We introduce an additional assumption
\begin{hypD}\label{hyp_diff:2}
We have
\begin{enumerate}
\item{$\lambda\in\textrm{Lip}_{\|\cdot\|_2}(\mathbb{R}^{d_x})$.}
\item{For any $y\in\mathsf{Y}$, $g(\cdot,y)\in\mathcal{B}_b(\mathbb{R}^{d_x})\cap\textrm{Lip}_{\|\cdot\|_2}(\mathbb{R}^{d_x})$.}
\end{enumerate}
\end{hypD}
We have the following result.
\begin{prop}\label{prop:weak_error}
Assume (D\ref{hyp_diff:1}-\ref{hyp_diff:2}). For $(t,\varphi)\in\mathbb{N}\times\mathcal{B}_b(\mathbb{R}^{d_x})\cap\textrm{\emph{Lip}}_{\|\cdot\|_2}(\mathbb{R}^{d_x})$ there exists a $C<+\infty$ such that for any $l\in\mathbb{N}_0$
$$
|\pi_t^l(\varphi)-\pi_t(\varphi)| \leq C\Delta_l.
$$
\end{prop}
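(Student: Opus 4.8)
The plan is to work with the unnormalized measures underlying \eqref{eq:cont_filter}--\eqref{eq:cont_filter_euler} and then pass to the ratio. Write
\[
\gamma_t(\psi) := \mathbb{E}\!\left[\psi(X_t)\left(\prod_{k=1}^{n_t} g(X_{s_k},y_{s_k})\lambda(X_{s_k})\right)\exp\left\{-\int_0^t\lambda(X_r)\,dr\right\}\right],
\]
and let $\gamma_t^l(\psi)$ be the analogous quantity with $X$ replaced by the Euler path $\widetilde{X}$ (using \eqref{eq:inbetween_euler} at off-grid times) and $\int_0^t\lambda(X_r)\,dr$ replaced by the Riemann sum $\Delta_l\sum_{k=0}^{t\Delta_l^{-1}-1}\lambda(\widetilde{X}_{k\Delta_l})$, so that $\pi_t(\psi)=\gamma_t(\psi)/\gamma_t(1)$ and $\pi_t^l(\psi)=\gamma_t^l(\psi)/\gamma_t^l(1)$. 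First I would use the standard ratio identity
\[
\pi_t^l(\varphi)-\pi_t(\varphi)=\frac{1}{\gamma_t^l(1)}\left(\gamma_t^l(\varphi)-\gamma_t(\varphi)\right)-\frac{\pi_t(\varphi)}{\gamma_t^l(1)}\left(\gamma_t^l(1)-\gamma_t(1)\right).
\]
Since $\lambda\in[\underline{c},\overline{c}]$ the exponential weights lie in $[e^{-\overline{c}t},e^{-\underline{c}t}]$ and $\gamma_t(1)>0$; granting the bound $|\gamma_t^l(1)-\gamma_t(1)|\le C\Delta_l$, the denominator $\gamma_t^l(1)$ is bounded below uniformly for all large $l$ (the finitely many small $l$ are absorbed into $C$ using the trivial bound $|\pi_t^l(\varphi)-\pi_t(\varphi)|\le 2\|\varphi\|$). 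Thus it suffices to prove $|\gamma_t^l(\psi)-\gamma_t(\psi)|\le C\Delta_l$ for $\psi\in\{1,\varphi\}$, both bounded and (for $\varphi$) Lipschitz.

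Next I would exploit the Markov/Feynman--Kac structure. The observation data $n_t,s_{1:n_t},y_{s_{1:n_t}}$ are fixed, so ordering $0<s_1<\dots<s_{n_t}<t$ and writing the killed semigroup
\[
Q_\tau(\psi)(x):=\mathbb{E}_x\!\left[\psi(X_\tau)\exp\left\{-\int_0^\tau\lambda(X_r)\,dr\right\}\right],\qquad G_k(x):=g(x,y_{s_k})\lambda(x),
\]
one has $\gamma_t(\psi)=\big(Q_{s_1}M_{G_1}Q_{s_2-s_1}M_{G_2}\cdots M_{G_{n_t}}Q_{t-s_{n_t}}(\psi)\big)(x_*)$, where $M_G$ denotes multiplication by $G$, and $\gamma_t^l(\psi)$ is the identical composition with each $Q_\tau$ replaced by its Euler--Feynman--Kac counterpart $Q_\tau^l$ (Euler transition plus the left-point Riemann-sum potential). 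Telescoping across these $n_t+1$ operators expresses $\gamma_t(\psi)-\gamma_t^l(\psi)$ as a sum of $n_t+1$ terms, each a bounded composition of operators applied to a single one-interval difference $(Q_\tau-Q_\tau^l)(\psi_j)$ acting on an intermediate value function $\psi_j$. Boundedness of every factor---hence finiteness of the telescope with a constant depending only on $n_t$, $\overline{c}$, $\|g\|$ and $t$---follows from (D\ref{hyp_diff:1})--(D\ref{hyp_diff:2}) and $\lambda\in[\underline{c},\overline{c}]$; this reduces the problem to the single-interval estimate $\|(Q_\tau-Q_\tau^l)(\psi)\|\le C\Delta_l$, in the spirit of the decompositions used in \cite{mlpf}.

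The heart of the argument, and the step I expect to be the main obstacle, is proving this one-interval estimate with the \emph{full} weak rate $\Delta_l$ rather than the strong rate $\Delta_l^{1/2}$ that Euler--Maruyama alone delivers for Lipschitz functionals \cite{kloeden}. The gain of the extra half-order comes from the uniform ellipticity of $a=\sigma\sigma^*$ in (D\ref{hyp_diff:1}): ellipticity provides the smoothing property of $Q_\tau$, so that $Q_\tau(\psi)$ is smooth with controlled derivatives even when $\psi$ is merely bounded, and a weak-error expansion of Bally--Talay type then yields order $\Delta_l$ for the killed diffusion, including the contribution of replacing $\int_0^\tau\lambda(X_r)\,dr$ by its left-point Riemann sum and the interpolation error incurred at the off-grid times $s_k$ and $t$ via \eqref{eq:inbetween_euler}. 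Summing the $n_t+1$ telescoped terms gives $|\gamma_t^l(\psi)-\gamma_t(\psi)|\le C\Delta_l$ for $\psi\in\{1,\varphi\}$, and substituting these into the ratio identity of the first paragraph completes the proof.
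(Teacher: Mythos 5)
Your proposal is sound, but it is worth noting that the paper does not actually prove Proposition \ref{prop:weak_error}: its entire ``proof'' is a citation to \cite{crisan_bain,kloeden}, where this weak-error bound is treated as standard. What you have written is essentially a reconstruction of the argument behind those citations --- the ratio identity for the normalized measures, the telescoping of the Feynman--Kac semigroup across the (fixed) observation times, and the reduction to a one-interval weak-error estimate --- so your route is the standard one, just made explicit where the paper defers to the literature. The step you single out as the main obstacle is indeed the crux, and your diagnosis is the right one: strong-error plus Lipschitz continuity alone can only give $\mathcal{O}(\Delta_l^{1/2})$, because the linear interpolation \eqref{eq:inbetween_euler} suppresses the $\mathcal{O}(\Delta_l^{1/2})$ bridge fluctuation of the true diffusion at an off-grid observation time $s_k$, and for a merely Lipschitz $g(\cdot,y_{s_k})$ this genuinely costs a half order unless the uniform ellipticity in (D\ref{hyp_diff:1}) is invoked to bound the density of the process near the non-smooth points of $g$; the same smoothing argument handles the left-point Riemann sum replacing $\int_0^t\lambda(X_r)\,dr$. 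Your sketch leaves that Bally--Talay-type expansion to the literature rather than proving it, but since the paper does exactly the same (and more coarsely), this is not a gap relative to the paper's own standard of proof.
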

\begin{proof}
One can see \cite{crisan_bain,kloeden} for justifications of this result as it is standard in the literature.
\end{proof}

\subsection{Filtering}

We shall now follow \cite{jasra} and present the filter in a recursive form. We will then detail some algorithms for approximating the afore-mentioned filter.

\subsubsection{The Filter Measure}\label{sec:disc_filter}

Throughout the section $l\in\mathbb{N}_0$ is given and as $x_{*}$ is fixed, it is removed from the notation where possible. This presentation closely follows \cite{jasra} and the purpose of these notations is to achieve a consistency with the literature on particle filters. Let $p\in\mathbb{N}_0$ be given, we use the notation
$$
u_p^l := (x_{p+\Delta_l},x_{p+2\Delta_l},\dots, x_{p+1}) \in\mathbb{R}^{d_x(\Delta_l^{-1})} =: E_l
$$
which denotes a path on a unit time $(p,p+1]$; we have removed the $\widetilde{\cdot}$ notation for simplicity. %We shall set $u_0^l:= (x_{*},x_{\Delta_l},\dots, x_{1})\in E_l\times\mathbb{R}^{d_x}$.
%For $\varphi\in\mathcal{B}_b(E_l)$ we will extend the function on $E_l$
%as $\mathbf{\varphi}^l:E_l\rightarrow\mathbb{R}$ as 
%$$
%\mathbf{\varphi}^l(x_{\Delta_l},\dots,x_1) = \varphi(x_1).
%$$
We now set for $p\in\mathbb{N}_0$
$$
G_p^l(u_{p-1}^l,u_p^l) := \left(\prod_{k=n_{p}+1}^{n_{p+1}}g(x_{s_k},y_{s_k})\lambda(x_{s_k})\right)\exp\left\{-\Delta_l\sum_{k=0}^{\Delta_l^{-1}-1}\lambda(x_{p+(k+1)\Delta_l})\right\},
$$
where $n_0=0$, $u_{-1}^l=x_{*}$ and this is a function on $E_{l-1} \times E_l$ (in fact it is a function on $\mathbb{R}^{d_x}\times E_l$ because only $x_p^l$ from $u_{p-1}^l$ is used as we will see later) due to the interpolation defined in \eqref{eq:inbetween_euler}. 
The density function of Euler-Maruyama discretizations can be characterized by Gaussian densities; we denote by $m^{l}$ the transition of \eqref{eq:euler}. For $p \in \mathbb{N}$, the initial measures and Markov kernels that we will need are
\begin{align*}
\eta_{0}^l(du_{0}^l) & =  \prod_{k=1}^{\Delta_l^{-1}} m^l(x_{(p-1)\Delta_l},x_{p\Delta_l})dx_{k\Delta_l} \\
\vspace{-5pt}
M^{l}(u_{p-1}^l,du_{p}^l) & = \prod_{k=1}^{\Delta_l^{-1}} m^l(x_{p+(k-1)\Delta_l},x_{p+k\Delta_l})dx_{p+k\Delta_l}.
\end{align*}
It is worth noting that the dependence of both $G_{p}^{l}(u_{p-1}^l,u_{p}^l)$ and $M^{l}(u_{p-1}^l,du_{p}^l)$ on $u_{p-1}^l$ is present only on its last element $x_p^l$, meaning that for $M^l$ the transition is only dependent on $x_p^l$, we will use these facts below.

The time-discretized filter \emph{on path space} (i.e.~over time interval $(p-1,p]$, $p\in\mathbb{N}$) can then be written, using the above notation, for any $p\in\{2,3,\dots\}$ as
\begin{equation}\label{eq:path_filt}
\overline{\pi}_{p}^l(du_{p-1}^l) := \frac{\int_{E_l^{p-1}}\left\{\prod_{k=0}^{p-1}G_k^l(u_{k-1}^l,u_k^l)\right\}\eta_{0}^l(du_{0}^l)\prod_{k=1}^{p-1} M^{l}(u_{k-1}^l,du_{k}^l)}{\int_{E_l^{p}}\left\{\prod_{k=0}^{p-1}G_k^l(u_{k-1}^l,u_k^l)\right\}\eta_{0}^l(du_{0}^l)\prod_{k=1}^{p-1} M^{l}(u_{k-1}^l,du_{k}^l)}
\end{equation}
with 
$$
\overline{\pi}_{1}^l(du_0^l) := \frac{G_0^l(u_{-1}^l,u_0^l)~\eta_{0}^l(du_{0}^l)}{\int_{E_l}G_0^l(u_{-1}^l,u_0^l)~\eta_{0}^l(du_{0}^l)}.
$$
This collection of probability measures will be of use later on in the article.
We note that, for any $p\in\mathbb{N}$, $l\in \mathbb{N}_0$, and $\varphi\in\mathcal{B}_b(\mathbb{R}^{d_x})$, we have that the expectation of $\varphi$ w.r.t. the approximate filter distribution at time $p$ as
$$
\pi_p^l(\varphi) = \int_{E_l}\varphi(x_p)\overline{\pi}_{p}^l(du_{p-1}^l).
$$

\subsubsection{Particle and Coupled Particle Filter}

We begin by describing the particle filter for approximating $\pi_p^l(\varphi)$ for a given $l\in\mathbb{N}_0$, with $p\in\mathbb{N}$ and any $\varphi:\mathbb{R}^{d_x}\rightarrow\mathbb{R}$ that is $\pi_p^l-$integrable. For a given $N\in\mathbb{N}$, the particle filter generates a system of random variables on $(E_l^N)^{n+1}$ at a time $n\in\mathbb{N}_0$ according to the probability measure
$$
\mathbb{Q}(d(u_0^{l,1:N},\dots,u_n^{l,1:N})) = \left\{\prod_{i=1}^N M^l(x_{*},du_0^{l,i})\right\}\prod_{p=0}^{n-1}\prod_{i=1}^N
\left\{\sum_{j=1}^{N}\frac{G_{p}^l(u_{p-1}^{l,j},u_p^{l,j})}{\sum_{k=1}^NG_{p}^l(u_{p-1}^{l,k},u_p^{l,k})} M^l(u_{p}^{l,j},du_{p+1}^{l,i})\right\}.
$$
%where for $\varphi\in\mathcal{B}_b(E_l)$
An algorithmic description of the particle filter is given in \autoref{alg:pf}. For $(t,\varphi)\in\mathbb{N}\times\mathcal{B}_b(\mathbb{R}^{d_x})$ one can approximate the time-discretized filter $\pi_{t}^l(\varphi)$ corresponding to step-size $\Delta_l$ via
\begin{equation}\label{eq:pf_est}
\pi_{t}^{l,N}(\varphi) := \frac{\sum_{i=1}^N G_{t-1}^l(u_{t-2}^{l,i},u_{t-1}^{l,i})\varphi(x_t^{l,i})}{\sum_{i=1}^N G_{t-1}^l(u_{t-2}^{l,i},u_{t-1}^{l,i})},
\end{equation}
which can easily be shown to converge (e.g.~in probability as $N\rightarrow\infty$) to $\pi_{t}^l(\varphi)$; see \cite{delm:2004}.
We define an empirical measure at time $p-1$ and level $l$ which will be needed later on through the following expectation. Let $\varphi\in\mathcal{B}_b(E_l^2)$ and $p\in\mathbb{N}$, we define
\begin{equation}\label{eq:empirical_measure_pred}
\eta_{p-1}^{l,N}(\varphi) := \frac{1}{N}\sum_{i=1}^N \varphi(u_{p-2}^{l,i},u_{p-1}^{l,i}).
\end{equation}

\begin{algorithm}[h!]
\begin{enumerate}
\item{Initialize: For $i\in\{1,\dots,N\}$, generate $u_0^{l,i}$ from $M^l(x_{*},\cdot)$. Set $p=0$.}
\item{Update: For $i\in\{1,\dots,N\}$, generate $u_{p+1}^{l,i}$ from 
$$
\sum_{j=1}^{N}\frac{G_{p}^l(u_{p-1}^{l,j},u_p^{l,j})}{\sum_{k=1}^NG_{p}^l(u_{p-1}^{l,k},u_p^{l,k})} M^l(u_{p}^{l,j},du_{p+1}^{l,i}).
$$
Set $p=p+1$ and return to the start of 2.}
\end{enumerate}
\caption{Particle Filter.}
\label{alg:pf}
\end{algorithm}

We now consider the coupled particle filter (CPF) as developed in \cite{mlpf,mlpf_nc} (see also \cite{mlpf_new}). Contrary to the presentation in \cite{jasra} we restrict ourselves to an algorithmic rather than operator based description. The objective of the coupled particle filter, for $l\in\mathbb{N}$ given and up-to a time $t\in\mathbb{N}$, is to generate two clouds of particles $u_{t}^{l,1:N}$ and $\bar{u}_{t}^{l-1,1:N}$ so that $u_{t}^{l,1:N}$ (resp.~$\bar{u}_{t}^{l-1,1:N}$) can be used to approximate $\pi_t^l(\varphi)$ (resp.~$\pi_t^{l-1}(\varphi)$). Moreover, that there is a dependence between these two clouds of particles which ensure that the variance of terms which approximate the difference $\pi_t^l(\varphi)-\pi_t^{l-1}(\varphi)$ will
fall with $l$. 

To describe the CPF in its most simple form, we need two algorithms: coupled sampling and coupled resampling.
We begin with the former, which is given in \autoref{alg:coup_euler} and simply is the well-known synchronous coupling of Euler-discretized diffusions. The method of coupled resampling, developed for multilevel applications in \cite{mlpf} can be found in \autoref{alg:max_coup}. This resampling algorithm maximizes the probability that two sampled indices are equal. As noted in \cite{jasra_clt}, \autoref{alg:max_coup} is by no means optimal, but appears to be the most used method in the literature.

We now give the CPF algorithm in \autoref{alg:cpf}. 
This is simply a type of particle filter that utilizes \autoref{alg:coup_euler} for sampling and \autoref{alg:max_coup} for resampling.
%In Algorithm \ref{alg:pf} this is encompassed in the simulation of $\Phi_p^l(\pi_{p-1}^{l,N})(\cdot)$, but due to the complicated nature of the version of this 
%operator for the coupled particle filter (see e.g.~\cite{jasra_clt}), we prefer the simpler description in Algorithm \ref{alg:cpf}.
The algorithm can estimate the difference
$\pi_t^l(\varphi)-\pi_t^{l-1}(\varphi)$, $t\in \mathbb{N}$, using the expression
\begin{equation}\label{eq:cpf_est}
[\pi_t^l-\pi_t^{l-1}]^N(\varphi) := 
\frac{\sum_{i=1}^N G_{t-1}^l(u_{t-2}^{l,i},u_{t-1}^{l,i})\varphi(x_t^{l,i})}{\sum_{i=1}^N G_{t-1}^{l}(u_{t-2}^{l,i},u_{t-1}^{l,i})} -
\frac{\sum_{i=1}^N G_{t-1}^{l-1}(\bar{u}_{t-2}^{l-1,i},\bar{u}_{t-1}^{l-1,i})\varphi(\bar{x}_t^{l-1,i})}{\sum_{i=1}^N G_{t-1}^{l-1}(\bar{u}_{t-2}^{l-1,i},\bar{u}_{t-1}^{l-1,i})}.
%\frac{\check{\pi}_{t-1}^{l,N}(\mathbf{G}_{t-1}^l\pmb{\varphi}^l)}{\check{\pi}_{t-1}^{l,N}(\mathbf{G}_{t-1}^l)} -
%\frac{\check{\bar{\pi}}_{t-1}^{l-1,N}(\mathbf{G}_{t-1}^{l-1}\pmb{\varphi}^{l-1})}{\check{\bar{\pi}}_{t-1}^{l-1,N}(\mathbf{G}_{t-1}^{l-1})} 
\end{equation}
We define two empirical measures at time $p-1$ and levels $l$, $l-1$ through the following expectations. For $(\varphi_1,\varphi_2)\in\mathcal{B}_b(E_l^2)\times\mathcal{B}_b(E_{l-1}^2)$, $p\in\mathbb{N}$, we have
\begin{equation}\label{eq:pred_cpf}
\check{\eta}_{p-1}^{l,N}(\varphi_1) := \frac{1}{N}\sum_{i=1}^N\varphi_1(u_{t-2}^{l,i},u_{t-1}^{l,i})\quad\textrm{and}\quad
\check{\bar{\eta}}_{p-1}^{l-1,N}(\varphi_2) := \frac{1}{N}\sum_{i=1}^N\varphi_2(\bar{u}_{t-2}^{l-1,i},\bar{u}_{t-1}^{l-1,i}).
\end{equation}

\begin{algorithm}[h!]
\begin{enumerate}
\item{Input: level $l$ and starting points $(x_0^l,x_0^{l-1})$.}
\item{Generate $Z_k\stackrel{\textrm{i.i.d.}}{\sim}\mathcal{N}_d(0,\Delta_l I_d)$, $k\in\{1,2,\dots,\Delta_l^{-1}\}$.}
\item{Level $l$: For $k\in\{0,1,2,\dots,\Delta_l^{-1}-1\}$ with $X_0^l=x_0^l$ generate
$$
X_{(k+1)\Delta_l}^l = X_{k\Delta_l}^l + b(X_{k\Delta_l}^l)\Delta_l + \sigma(X_{k\Delta_l}^l)Z_{k+1}.
$$
Set $U_0^l = (X_{\Delta_l}^l,X_{2\Delta_l}^l,\dots, X_1^l)$.
}
\item{Level $l-1$: For $k\in\{0,1,2,\dots,\Delta_{l-1}^{-1}-1\}$ with $X_0^{l-1}=x_0^{l-1}$ generate
\begin{eqnarray*}
X_{(k+1)\Delta_{l-1}}^{l-1} & = & X_{k\Delta_{l-1}}^{l-1} + b(X_{k\Delta_{l-1}}^{l-1})\Delta_{l-1} + \sigma(X_{k\Delta_{l-1}}^{l-1})\{Z_{2(k+1)-1}+Z_{2(k+1)}\}.
\end{eqnarray*}
}
Set $\overline{U}_0^{l-1} = (X_{\Delta_{l-1}}^{l-1},X_{2\Delta_{l-1}}^{l-1},\dots, X_1^{l-1})$.
\item{Output: $(U_0^l,\overline{U}_0^{l-1})$.}
\end{enumerate}
\caption{Coupled Euler Scheme on $[0,1]$.}
\label{alg:coup_euler}
\end{algorithm}

\begin{algorithm}[h!]
\begin{enumerate}
\item{Input: $N\in\mathbb{N}$, two clouds of particles $(U_1^{1:N},U_2^{1:N})$ and their associated probabilities $(W_1^{1:N},W_2^{1:N})$.}
\item{For $i\in\{1,\dots,N\}$, generate $R^i\sim\mathcal{U}_{[0,1]}$
\begin{itemize}
\item{If $R^i<\sum_{i=1}^N \min\{W_1^i,W_2^i\}$, generate $a^i\in\{1,\dots,N\}$ using the probability mass function
$$
\mathbb{P}(i) = \frac{\min\{W_1^i,W_2^i\}}{\sum_{j=1}^N \min\{W_1^j,W_2^j\}}
$$
and set $\tilde{U}_j^i=U_j^{a^i}$, 
$j\in\{1,2\}$.}
\item{Otherwise generate $(a_1^i,a_2^i)\in\{1,\dots,N\}^2$ using any coupling of the probability mass functions:
$$
\mathbb{P}_j(i) = \frac{W_j^i-\min\{W_1^i,W_2^i\}}{\sum_{k=1}^N[W_j^k-\min\{W_1^k,W_2^k\}]}
$$
and set $\tilde{U}_j^i=U_j^{a_j^i}$,  $j\in\{1,2\}$.}
\end{itemize}
}
\item{Set: $U_j^i=\tilde{U}_j^{i}$, $(i,j)\in\{1,\dots,N\}\times\{1,2\}$.}
\item{Output: $(U_1^{1:N},U_2^{1:N})$.}
\end{enumerate}
\caption{Maximal Coupling Resampling}
\label{alg:max_coup}
\end{algorithm}

\begin{algorithm}[h!]
\begin{enumerate}
\item{Input: $(l,N)\in\mathbb{N}^2$.}
\item{Initialize: For $i\in\{1,\dots,N\}$, generate $u_0^{l,i},\bar{u}_0^{l-1,i}$ by using \autoref{alg:coup_euler} with level $l$ and starting points $(x_{*},x_{*})$. Set $p=1$.}
\item{Iterate: For $i\in\{1,\dots,N\}$, compute the weights
$$
w_{p-1}^{l,i} = \frac{G_{p-1}^l(u_{p-2}^{l,i},u_{p-1}^{l,i})}{\sum_{j=1}^NG_{p-1}^l(u_{p-2}^{l,j},u_{p-1}^{l,j})} \quad\textrm{and}\quad \bar{w}_{p-1}^{l-1,i} = \frac{G_{p-1}^{l-1}(
\bar{u}_{p-2}^{l-1,i},\bar{u}_{p-1}^{l-1,i})}{\sum_{j=1}^NG_{p-1}^{l-1}(\bar{u}_{p-2}^{l-1,j},\bar{u}_{p-1}^{l-1,j})}.
$$
Perform \autoref{alg:max_coup} with inputs $N$, $(u_{p-1}^{l,1:N},\bar{u}_{p-1}^{l-1,1:N})$ and $(w_{p-1}^{l,1:N}, \bar{w}_{p-1}^{l-1,1:N})$. For $i\in\{1,\dots,N\}$, generate $u_p^{l,i},\bar{u}_p^{l-1,i}$ by using \autoref{alg:coup_euler} with level $l$ and starting points $(x_{p}^{l,i},\bar{x}_{p}^{l-1,i})$. Set $p=p+1$ and return to the start of 3.}
\end{enumerate}
\caption{Coupled Particle Filter}
\label{alg:cpf}
\end{algorithm}

\subsubsection{Multilevel Particle Filter}
We can now describe the MLPF using the PF and the CPF.
\begin{enumerate}
\item{Level 0: Run a PF as in \autoref{alg:pf} with $N_0$ samples, independently of all other levels.} 
\item{For each level $l\in\{1,\dots,L\}$: Run a CPF (to approximate the time-discretized filters at levels $l$ and $l-1$) as in \autoref{alg:cpf} with $N_l$ samples, independently of all other levels.}
\end{enumerate}
An estimator of expectations $\pi_{t}^{L}(\varphi)$ for $t\in\mathbb{N}, \varphi\in\mathcal{B}_b(\mathbb{R}^{d_x})$ with respect to the time-discretized filter at the highest level $L$ is then given by
\begin{equation}\label{eq:ml_est}
\pi_{t}^{L,ML}(\varphi) := \pi_{t}^{0,N_0}(\varphi) + \sum_{l=1}^L [\pi_{t}^{l}-\pi_{t}^{l-1}]^{N_l}(\varphi),
\end{equation}
where $\pi_{t}^{0,N_0}(\varphi)$ is the estimator in \eqref{eq:pf_est} at level $l=0$ with $N=N_0$ samples, 
and $[\pi_{t}^{l}-\pi_{t}^{l-1}]^{N_l}(\varphi)$ is the estimator in \eqref{eq:cpf_est} with $N=N_l$ samples.

\subsubsection{Unbiased Particle Filter}

The method of \cite{ub_pf} can be used to remove the time-discretization bias and we detail the method with as much brevity as possible; details can be found in the afore-mentioned reference.

The basic method in \cite{ub_pf} uses a double randomization technique which is based upon the approaches in \cite{mcl,rhee} (see also \cite{vihola}). 
Let $\varphi\in\mathcal{B}_b(\mathbb{R}^{d_x})$ be fixed but arbitrary and note that the idea below can be used with little effort for many different $\varphi$.
This constitutes selecting a positive probability mass function $\mathbb{P}_L$ on $\mathbb{N}_0$ and
constructing a sequence of independent random variables $(\Xi_l)_{l\in\mathbb{N}_0}$ so that for any $(l,t)\in\mathbb{N}^2$:
$$
\mathbb{E}[\Xi_l] = \pi_t^l(\varphi) - \pi_t^{l-1}(\varphi)
$$
with $\mathbb{E}[\Xi_0] = \pi_t^0(\varphi)$ and $\mathbb{E}$ is the expectation w.r.t.~the law associated to the simulation of the algorithm. Then one can sample $L$ from $\mathbb{P}_L$ and then construct $\Xi_L$ to obtain
the estimator 
\begin{equation}\label{eq:ub_est}
\widehat{\pi_t(\varphi)} = \frac{\Xi_L}{\mathbb{P}_L(L)},
\end{equation}
which is an unbiased estimator of $ \pi_t(\varphi)$ and moreover, if one selects $\mathbb{P}_L$ so that
$$
\sum_{l\in\mathbb{N}_0}\frac{\mathbb{E}[\Xi_l^2]}{\mathbb{P}_L(l)} <+\infty,
$$
then the variance of the estimator in \eqref{eq:ub_est} is finite.
\cite{ub_pf} provides a method for constructing the sequence $(\Xi_l)_{l\in\mathbb{N}_0}$  and we now describe this.

We begin with the computation of $\Xi_0$ and the procedure that is needed is detailed in \autoref{alg:xi_0}.
The method requires a positive conditional probability mass function $\mathbb{P}_p(\cdot|l)$ \footnote{An independent probability mass function is also possible; see Section \ref{sec:ub_pf_numer}.} on $\mathbb{N}_0$ and a sequence
$(N_p)_{p\geq 0}$ of non-decreasing positive integers such that $N_p\uparrow\infty$ (e.g.~$N_p=N_0~ 2^p$). To compute
$\Xi_0$ we set, for $(p,t,\varphi)\in\mathbb{N}_0\times\mathbb{N} \times\mathcal{B}_b(E_0^2)$ given,
$$
\widetilde{\eta}_{t-1}^{0,N_{0:p}}(\varphi) := \sum_{q=0}^p \left(\frac{N_q-N_{q-1}}{N_p}\right)\eta_{t-1}^{0,N_q-N_{q-1}}(\varphi)
$$
with $N_{-1}=0$.
Then set for $\varphi\in\mathcal{B}_b(\mathbb{R}^{d_x})$
$$
\widetilde{\pi}_t^{0,N_{0:p}}(\varphi) := \frac{\int_{E_{0}^2} \varphi(x_t^0)~G_{t-1}^0(u_{t-2}^0,u_{t-1}^0) ~\widetilde{\eta}_{t-1}^{0,N_{0:p}}\left(d(u_{t-2}^0,u_{t-1}^0)\right)}
{\int_{E_{0}^2} G_{t-1}^0(u_{t-2}^0,u_{t-1}^0)~ \widetilde{\eta}_{t-1}^{0,N_{0:p}}\left(d(u_{t-2}^0,u_{t-1}^0)\right)}.
$$
Then we have
\begin{equation}\label{eq:xi_0}
\Xi_0 = \frac{1}{\mathbb{P}(p|l=0)}\left(\widetilde{\pi}_t^{0,N_{0:p}}(\varphi)-\widetilde{\pi}_t^{0,N_{0:p-1}}(\varphi)\right)
\end{equation}
with the convention that $\widetilde{\pi}_t^{0,N_{0:-1}}(\varphi)=0$.

\begin{algorithm}[h!]
\begin{enumerate}
\item{Sample $P$ from $\mathbb{P}_p(\cdot|l)$.}
\item{Run \autoref{alg:pf} with $N_0$ samples until time $t-1$ and denote the empirical measure (as in \eqref{eq:empirical_measure_pred}) as $\eta_{t-1}^{l,N_0}(\cdot)$. Set $q=1$. If $p=0$ stop; otherwise go to the next step.}
\item{Independently of all other random variables, run \autoref{alg:pf} with $N_q-N_{q-1}$ samples until time $t-1$ and denote the empirical measure (as in \eqref{eq:empirical_measure_pred}) as $\eta_{t-1}^{l,N_q-N_{q-1}}(\cdot)$. Set
$q=q+1$. If $q=p+1$ stop; otherwise go to the start of 3.}
\end{enumerate}
\caption{Computing $\Xi_0$. Throughout $(t,l)\in\mathbb{N}\times\mathbb{N}_0$ are given.}
\label{alg:xi_0}
\end{algorithm}

For the computation of $(\Xi_l)_{l\in\mathbb{N}}$, the procedure is similar, except one uses a CPF instead of a PF; see \autoref{alg:xi_l}. We set, for $(p,t,\varphi_1)\in\mathbb{N}_0\times\mathbb{N}\times\mathcal{B}_b(E_l^2)$ given,
$$
\widetilde{\eta}_{t-1}^{l,N_{0:p}}(\varphi_1) := \sum_{q=0}^p \left(\frac{N_q-N_{q-1}}{N_p}\right)\check{\eta}_{t-1}^{l,N_q-N_{q-1}}(\varphi_1)
$$
and, for $(p,t,\varphi_2)\in\mathbb{N}_0\times\mathbb{N}\times\mathcal{B}_b(E_{l-1}^2)$ given,
$$
\widetilde{\bar{\eta}}_{t-1}^{l-1,N_{0:p}}(\varphi_2) := \sum_{q=0}^p \left(\frac{N_q-N_{q-1}}{N_p}\right)\check{\bar{\eta}}_{t-1}^{l-1,N_q-N_{q-1}}(\varphi_2).
$$
Then set for $\varphi\in\mathcal{B}_b(\mathbb{R}^{d_x})$
\begin{align*}
[\widetilde{\pi}_t^{l}&-\widetilde{\pi}_t^{l-1}]^{N_{0:p}}(\varphi)=\\
&\frac{ \int_{E_{l}^2} \varphi(x_t^l)G_{t-1}^l(u_{t-2}^l,u_{t-1}^l)\widetilde{\eta}_{t-1}^{l,N_{0:p}}\left(d(u_{t-2}^l,u_{t-1}^l)\right) }
{\int_{E_{l}^2} \varphi(x_t^l)G_{t-1}^l(u_{t-2}^l,u_{t-1}^l)\widetilde{\eta}_{t-1}^{l,N_{0:p}}\left(d(u_{t-2}^l,u_{t-1}^l)\right)} -
\frac{\int_{E_{l-1}^2} \varphi(x_t^{l-1})G_{t-1}^{l-1}(\bar{u}_{t-2}^{l-1},\bar{u}_{t-1}^{l-1})\widetilde{\bar{\eta}}_{t-1}^{l-1,N_{0:p}}\left(d(\bar{u}_{t-2}^{l-1},\bar{u}_{t-1}^{l-1})\right)}
{\int_{E_{l-1}^2} G_{t-1}^{l-1}(\bar{u}_{t-2}^{l-1},\bar{u}_{t-1}^{l-1})\widetilde{\bar{\eta}}_{t-1}^{l-1,N_{0:p}}\left(d(\bar{u}_{t-2}^{l-1},\bar{u}_{t-1}^{l-1})\right)}.
\end{align*}
Then we have
\begin{equation}\label{eq:xi_l}
\Xi_l = \frac{1}{\mathbb{P}(p|l)}\Big([\widetilde{\pi}_t^{l}-\widetilde{\pi}_t^{l-1}]^{N_{0:p}}(\varphi)-[\widetilde{\pi}_t^{l}-\widetilde{\pi}_t^{l-1}]^{N_{0:p-1}}(\varphi)\Big)
\end{equation}
with the convention that $[\widetilde{\pi}_t^{l}-\widetilde{\pi}_t^{l-1}]^{N_{0:-1}}(\varphi)=0$.

\begin{algorithm}[H]
\begin{enumerate}
\item{Sample $P$ from $\mathbb{P}_p(\cdot|l)$.}
\item{Run \autoref{alg:cpf} with $N_0$ samples until time $t-1$ and 
denote the empirical measures 
(as in \eqref{eq:pred_cpf}) as $\check{\eta}_t^{l,N_0}(\cdot)$ and $\check{\bar{\eta}}_{t-1}^{l-1,N_0}(\cdot)$. Set $q=1$. If $p=0$ stop; otherwise go to the next step.}
\item{Independently of all other random variables, run \autoref{alg:cpf} with $N_q-N_{q-1}$ samples and denote the 
empirical measures (as in \eqref{eq:pred_cpf}) as $\check{\eta}_{t-1}^{l,N_q-N_{q-1}}(\cdot)$ and $\check{\bar{\eta}}_{t-1}^{l-1,N_q-N_{q-1}}(\cdot)$. Set $q=q+1$. If $q=p+1$ stop; otherwise go to the start of 3.}
\end{enumerate}
\caption{Computing $\Xi_l$. Throughout $(t,l)\in\mathbb{N}\times\mathbb{N}$ are given.}
\label{alg:xi_l}
\end{algorithm}

Then to compute the unbiased estimator, one can simply use the expression in \eqref{eq:ub_est}, repeated $M\in\mathbb{N}$ times and in parallel; i.e.~a Monte Carlo estimator
\begin{align}
\label{eq:ub_est_M}
\widehat{\widehat{\pi_t(\varphi)}} := \frac{1}{M}\sum_{i=1}^M \frac{\Xi_{L_i}}{\mathbb{P}_L(L_i)},
\end{align}
 see \cite[Algorithm 5]{ub_pf} for further details.
Note that the finite variance and unbiasedness is to be established in Section \ref{sec:math}.

\subsection{Parameter Estimation}\label{sec:online}
\subsubsection{Model with Static Parameters and Score Function}

%In this section, we define stochastic processes and functions depending on the parameters to estimate. These definitions are analogous to the ones in Sec. \ref{sec:model}. Consider a measure space  $\left(\mathsf{X},\mathcal{X},\mathbb{P}_\theta\right)$ parametrized by $\theta\in \Theta \subset \mathbb{R}^{d_\theta}$ where $\Theta$ is compact and $d_\theta\in \mathbb{N}$. In this measurable space we define 
We consider the model as in Section \ref{sec:mod} except with some static parameters $\theta\in\Theta$.
For instance, the diffusion process $\{X_t\}_{t\geq 0}$ depends on a static parameter $\theta\in\Theta$ in the following manner
\begin{equation}
dX_t=b_{\theta}(X_t)dt+\sigma(X_t)dW_t
\label{eq:diff_theta}
\end{equation}
such that $X_0=x_*\in \mathbb{R}^{d_x}$ is given, $b_{\theta}:\mathbb{R}^{d_x}\rightarrow \mathbb{R}^{d_x}$ $   
\forall \theta \in \Theta$, $\sigma:\mathbb{R}^{d_x}\rightarrow \mathbb{R}^{d_x\times d_x}$ is non-constant, and $\{W_t\}_{t\geq 0}$ is a standard Brownian motion. In the remainder of the model (i.e.~as in Section \ref{sec:mod}) we allow both $g$ and $\lambda$ to depend on $\theta$ as well, using the subscript $\theta$ in the notation from herein. 

We define, for $\varphi\in \mathcal{B}_b(\mathbb{R}^{d_x})$, $t\in\mathbb{N}$
\begin{align*}
\gamma_{t,\theta}(\varphi):=\mathbb{E}_{\theta}\left[\varphi(X_t)\left(\prod_{k=1}^{n_t} g_\theta(X_{s_k},y_{s_k})\lambda_\theta(X_{s_k})\right)
\exp\left\{-\int_{0}^t \lambda_\theta(X_r)dr\right\} \right]
\end{align*}
where the expectation $\mathbb{E}_{\theta}$ is w.r.t.~law of the process \eqref{eq:diff_theta}. Notice that for $\varphi(x)\equiv 1$, this represents the likelihood of the data. 
Our objective is to perform an online (as the data arrive) static parameter estimation using the score function which (under simple assumptions which are not stated here), for $t\in\mathbb{N}$, is given by
\begin{align}\label{eq:score}
\nabla_\theta \log \left(\gamma_{t, \theta}(1)\right) = \frac{
\mathbb{E}_{\theta}\left[\zeta_{t,\theta}\left(\prod_{k=1}^{n_t} g_{\theta}(X_{s_k},y_{s_k})\lambda_{\theta}(X_{s_k})\right)
\exp\left\{-\int_{0}^t \lambda_{\theta}(X_r)dr\right\}
\right]
}{\mathbb{E}_{\theta}\left[\left(\prod_{k=1}^{n_t} g_{\theta}(X_{s_k},y_{s_k})\lambda_{\theta}(X_{s_k})\right)
\exp\left\{-\int_{0}^t \lambda_{\theta}(X_r)dr\right\}
\right]},
\end{align}
where %$T\in \mathbb{R}^{+}$, and $\omega_{T,\theta}\in \mathbb{R}^{d_\theta}$ is defined as
\begin{align}
\zeta_{t,\theta}:=\int_0^t\left(\nabla_\theta b_\theta\left(X_r\right)\right)^* a\left(X_r\right)^{-1} \sigma\left(X_r\right) d W_r+\sum_{k=1}^{n_t}\nabla_\theta \log\left( g_{\theta}(X_{s_k},y_{s_k})\lambda_{\theta}(X_{s_k})\right)-\int_{0}^t \nabla_\theta \lambda_\theta(X_r)dr.
\label{eq:omega}
\end{align}
An analogous derivation of \eqref{eq:score} can be found in Appendix A of \cite{beskos}.

\subsubsection{Stochastic Gradient Approach}
In our parameter estimation strategy, we attempt to maximize the limiting \textit{average log-likelihood}
\begin{align*}
\mathcal{L}(\theta)=\lim_{t\rightarrow \infty}\frac{1}{t}\int_{0}^t \log \gamma_{s,\theta}(1)ds
\end{align*}
which, as well as its gradient $\nabla \mathcal{L}(\theta)$, can be shown to be an ergodic average under appropriate stability and regularity conditions \cite{surace}. One can estimate the parameters using the gradient $\nabla_\theta \log \left(\gamma_{t, \theta}(1)\right)$ and a stochastic gradient ascent methodology as
\begin{align}
\theta_{m+1}=\theta_{m}+\alpha_m  \nabla_\theta \log \left(\gamma_{m+1, \theta_{0:m}}(1)\right),
\label{eq:sgd_1}
\end{align}
where $m\in \mathbb{N}_0$, $\theta_0 \in \mathbb{R}^{d_\theta}$ is given, and $\{\alpha_m\}_{m \in \mathbb{N}}$ is a decreasing sequence of step-sizes. The recursion \eqref{eq:sgd_1} is not suitable for online computations since the complexity of the score function is $\mathcal{O}(m)$. In practice, we use online updates of the score function $\nabla_\theta \log \left(\gamma_{m+1, \theta_{0:m}}(1)\right)$ that include all updates of the parameters up to time $m-1$, where the filter in the interval of time $(m,m+1]$ is updated using the parameter $\theta_m$, see \cite{legland, kantas} (more details in Section \ref{subsec:par_est}). This approach can be counterproductive in the sense that the first parameters $\theta_0$, $\theta_1,\cdots, \theta_{q}$, $q<<m$ 
are always included in the score function. In order to alleviate this problem we use a different estimator $\nabla_\theta \log \left(\gamma_{c(m+1), \theta_{0:m}}(1)\right)-\nabla_\theta \log \left(\gamma_{cm, \theta_{0:m-1}}(1)\right)$, $c\in \mathbb{N}$, that highlights the new observations. Its stochastic gradient ascent equation is
\begin{align}
\theta_{m+1}=\theta_{m}+\alpha_m  \left[\nabla_\theta \log \left(\gamma_{c(m+1), \theta_{0:m}}(1)\right)-\nabla_\theta \log \left(\gamma_{cm, \theta_{0:m-1}}(1)\right)\right]
\label{eq:sgd}
\end{align}
with the convention $\nabla_\theta \log \left(\gamma_{0, \theta_{0:-1}}(1)\right)=0$. It can be shown, for $\{\alpha_m\}_{m\in \mathbb{N}_0}$ such that $\alpha_m\to 0$, $\sum_{m \in \mathbb{N}_0}\alpha_m=\infty$ and $\sum_{m \in \mathbb{N}_0}\alpha_m^2 <\infty$, that $\theta_m$ in \eqref{eq:sgd_1} and \eqref{eq:sgd} converges as $m\rightarrow \infty$, for proofs see \cite{BMP90,legland}. In the following sections, we study the particle systems of discretized estimation of the score function.

\subsubsection{Discretized Score}

In order to compute practical approximations of the score, we resort to the same discretization scheme displayed in \eqref{eq:euler} and \eqref{eq:cont_filter_euler} where we replace the functions $b,$ $g$ and $\lambda$ with its $\theta$ dependent counterparts. The time discretization of $\zeta_{t,\theta}$ is written as
\begin{align*}
\zeta_{t, \theta}^l\left(x_0, x_{\Delta_l}, \ldots, x_t\right):=&\sum_{k=0}^{t\Delta_l^{-1}-1}\left\{\left(\nabla_\theta b_\theta\left(x_{k \Delta_l}\right)\right)^* a\left(x_{k \Delta_l}\right)^{-1} \sigma\left(x_{k \Delta_l}\right)\left(W_{(k+1) \Delta_l}-W_{k \Delta_l}\right)\right.\\
&+\sum_{k=1}^{n_t}\nabla_\theta \log\left( g_{\theta}(x_{s_k},y_{s_k})\lambda_{\theta}(x_{s_k})\right)-\sum_{k=0}^{t\Delta_l^{-1}-1} \nabla_\theta \lambda_\theta(x_{k\Delta_l})\Delta_l
\end{align*}
and recall the interpolation \eqref{eq:inbetween_euler}. Now, we approximate the score function as 
\begin{align*}
\nabla_\theta \log \left(\gamma_{t, \theta}^l(1)\right):=\frac{{\mathbb{E}}_\theta\left[\zeta_{t, \theta}^l\left(\widetilde{X}_0, \widetilde{X}_{\Delta_l}, \ldots, \widetilde{X}_t\right)\left(\prod_{i=1}^{n_t} g_{\theta}(\widetilde{X}_{s_i},y_{s_i})\lambda_{\theta}(\widetilde{X}_{s_i})\right)\exp\left\{-\sum_{j=0}^{t\Delta_l^{-1}-1} \lambda_{\theta}(\widetilde{X}_{j\Delta_l})\Delta_l\right\}\right] }{{\mathbb{E}}_\theta\left[  \left(\prod_{i=1}^{n_t} g_{\theta}(\widetilde{X}_{s_i},y_{s_i})\lambda_{\theta}(\widetilde{X}_{s_i})\right)\exp\left\{-\sum_{j=0}^{t\Delta_l^{-1}-1} \lambda_{\theta}(\widetilde{X}_{j\Delta_l})\Delta_l\right\}   \right]}.
\end{align*}
Following the proof of \cite[Appendix B]{beskos}, one can establish convergence of $\nabla_\theta \log \left(\gamma_{t, \theta}^l(1)\right)$ to $\nabla_\theta \log \left(\gamma_{t, \theta}(1)\right)$ as $l\rightarrow\infty$; we do not give the statement here as it is essentially analogous to that in \cite{beskos}.

\subsubsection{Backward Feynman-Kac Model and Particle Smoothing}

In the following, we define some objects to help us build the particle smoother which will ultimately allow us to define an online algorithm for estimating the score function.
We shall suppose that $t\in\mathbb{N}$ is fixed in the forthcoming description.

%Recall the notation 
%$u_{k, l}=\left(x_{k+\Delta_l}, \ldots, x_{k+1}\right) \in E_l:=\left(\mathbb{R}^{d_x}\right)^{\Delta_l^{-1}}, $ $k \in \mathbb{N}_0$. Analogously to \eqref{eq:cond_den} we approximate the conditional density of the observations on the process $\{X_r\}_{r\in [k, k+1]}$ as 
%\begin{align*}
%\mathbf{G}_{k,\theta}^{l}(u_{k-1,l},u_{k,l}):=\prod_{p=0}^{\Delta_l^{-1}-1} g_{k+p, \theta}^l\left(x_{k+p \Delta_l},x_{k+(p+1)\Delta_l}\right)
%\end{align*}
%where we set $u_{-1,l}=x^*$ for each level $l$. 

Now we can represent the discretized smoother for $p\in\mathbb{N}$ (see \eqref{eq:path_filt} also)
\begin{align}
\label{eq:FKmodel}
\overline{\pi}_{p,\theta}^{l,S}\big(d(u_{0}^l,\dots,u_{p-1}^l)\big) :=  \frac{\big(\prod_{k=0}^{p-1} G_{k,\theta}^l(u_{k-1}^l,u_{k}^l)\big)\,\eta_{0,\theta}^l(du_{0}^l)\prod_{k=1}^{p-1} M_{\theta}^{l}(u_{k-1}^l,du_{k}^l)}
{\int_{E_l^{p}}\big(\prod_{k=0}^{p-1} G_{k,\theta}^l(u_{k-1}^l,u_{k}^l)\big)\,\eta_{0,\theta}^l(du_{0}^l)\prod_{k=1}^{p-1} M_{\theta}^{l}(u_{k-1}^l,du_{k}^l)}.
\end{align}
where we have added $\theta$ subscripts for the quantities $\eta_0^l$, $M^l$ and $G_k^l$ as in Section \ref{sec:disc_filter}.
As in the previous subsection, the Feynman-Kac structure of \eqref{eq:FKmodel} allows a particle filter estimation. We aim to estimate \eqref{eq:score} online using backward smoothing; this is possible given the additive structure of \eqref{eq:omega} in terms of the dependence of the variables $u_{k}^l$ \cite{del_fb}. Such dependence will be shown in the following. For any $p \in \mathbb{N}_0$, let $f_{p,\theta}^l: \mathbb{R}^{d_x\times d_x}\rightarrow \mathbb{R}^{d_\theta}$ be defined as
\begin{align*}
f_{p,\theta}^l(x_{p\Delta_l},x_{(p+1)\Delta_l})  := &  (\nabla_{\theta}b_{\theta}(x_{p\Delta_l}))^*a(x_{p\Delta_l})^{-1}\sigma(x_{p\Delta_l})(W_{(p+1)\Delta_l}-W_{p\Delta_l})  \\[0.2cm]  &+\sum_{k=n_{p\Delta_l}+1}^{n_{(p+1)\Delta_l}} \nabla_\theta \log\left( g_{\theta}({x}_{s_k},y_{s_k})\lambda_{\theta}({x}_{s_k})\right)- \nabla_\theta \lambda_\theta(x_{p\Delta_l})\Delta_l.
\end{align*}
%where $A_p^l=\{j:p\Delta_l<s_j\leq (p+1)\Delta_l, 1\leq j\leq n_t \}$.
Now we define for $p\in \{0,\dots,t-1\}$ so that
\begin{align}
\mu_{p,\theta}^l(u_{p-1}^l,u_{p}^l) &:= \sum_{k=0}^{\Delta_l^{-1}-1}f_{k,\theta}^l(x_{p+k\Delta_l},x_{p+(k+1)\Delta_l});
\nonumber \\
F_{t,\theta}^l(u_{0}^l,\dots,u_{t-1}^l) &:= \sum_{p=0}^{t-1}\mu_{p,\theta}^l(u_{p-1}^l,u_{p}^l)\,\,\Big( \equiv \zeta_{t,\theta}^l(x_0,x_{\Delta_l},\dots,x_t)  \Big). \label{eq:defF}
\end{align}

The discretized smoothing distribution can be written via the time-reversal formula for hidden Markov models (see e.g.~\cite[Section 3.2]{beskos}) as
\begin{align*}
\overline{\pi}_{t,\theta}^{l,S}\big(d(u_{0}^l,\dots,u_{t-1}^l) \big) := \overline{\pi}_{t,\theta}^l(du_{t-1}^l)
\prod_{k=1}^{t-1}  B_{k,\theta,\overline{\pi}_{k,\theta}^l}^l(u_{k}^l,du_{k-1}^l),
\end{align*}
where $\overline\pi_{t,\theta}^l$ is the path-wise filter as in \eqref{eq:path_filt} with $\theta$ subscripts
%\begin{align*}
%\pi_{T-1,\theta}^l(du_{T-1,l})=\int_{(E_l)^{T}} \mathbb{Q}_{T-1,\theta}^l\big(d(u_{0,l},\dots,u_{T-1,l}) \big)
%\end{align*}
and the backward Markov kernel is defined as, where we write the density of $M_{\theta}^l$ as $M_{\theta,d}^l$,
\begin{align}
\label{eq:BACK}
B_{k,\theta,\overline{\pi}_{k,\theta}^l}^l(u_{k}^l, &du_{k-1}^l) :=  
\frac{\overline{\pi}_{k,\theta}^l(du_{k-1}^l)\, 
G_{k,\theta}^l(u_{k-1}^l,u_{k}^l) M_{\theta,d}^l(u_{k-1}^l,u_{k}^l)} {\overline{\pi}_{k,\theta}^l(
G_{k,\theta}^l(\cdot,u_{k}^l) M_{\theta,d}^l(\cdot,u_{k}^l))}
\end{align}
with the notation $$\overline{\pi}_{k,\theta}^l(
G_{k,\theta}^l(\cdot,u_{k}^l) M_{\theta,d}^l(\cdot,u_{k}^l)) := \int_{E_l}\overline{\pi}_{k,\theta}^l(du_{k-1}^l) 
G_{k,\theta}^l(u_{k-1}^l,u_{k}^l) M_{\theta,d}^l(u_{k-1}^l,u_{k}^l).
$$
Given the structure of both $G_{k,\theta}^l$ and $M_{\theta,d}^l$, instead of depending on $u_{k}^l$, the backward kernel depends only on its first element $x_{k+\Delta_l}$, and the cost of computing it does not depend on the level $l$. For $k \in\left\{0,1, \ldots, t\Delta_l^{-1}-1\right\}$ we define
$$
\overline{g}_{k, \theta}^l\left(x_{k \Delta_l},x_{(k+1) \Delta_l}\right):=  \left(\prod_{i=n_{k\Delta_l}+1}^{n_{(k+1)\Delta_l}} g_{\theta}({x}_{s_i},y_{s_i})\lambda_{\theta}({x}_{s_i})\right)
\exp\left\{- \lambda_{\theta}(x_{k\Delta_l})\Delta_l\right\},
$$
then, one has
\begin{align}
B_{k,\theta,\overline{\pi}_{k,\theta}^l}^l(u_{k}^l, &du_{k-1}^l) :=  
\frac{\overline{\pi}_{k,\theta}^l(du_{k}^l)\, 
\overline{g}_{k,\theta}^l(x_{k},x_{k+\Delta_l}) m_{\theta}^l(x_{k},x_{k+\Delta_l})} {\overline{\pi}_{k,\theta}^l(
\overline{g}_{k,\theta}^l(\cdot,x_{k+\Delta_l}) m_{\theta}^l(\cdot,x_{k+\Delta_l}))},
\end{align}  
where $m_\theta^l$ is the transition of the Euler-Maruyama discretization of equation \eqref{eq:diff_theta}. Now, we have the following representation of the (time-discretized) score function
\begin{align}
\label{eq:backward_grad}
\nabla_{\theta}\log(\gamma_{t,\theta}^l(1)) = \int_{E_l^{t}}
F_{t,\theta}^l(u_{0}^l,\dots,u_{t-1}^l)\, \overline{\pi}_{t-1,\theta}^{l,S}\big(d(u_{0}^l,\dots,u_{t}^l) \big).
%\frac{\mathbf{G}_{T-1,\theta,\Delta_l}^l(u_{T-1,l})}{\eta_{T-1,\theta}^l(\mathbf{G}_{T-1,\theta,\Delta_l}^l)}.
\end{align}
We can exploit the structure of \eqref{eq:defF} to estimate \eqref{eq:backward_grad}, which is partially our objective. The estimation is made online by computing particle estimators of the backward kernel. For precise details of the online smoothing methods see \cite{del_fb} for instance. 
The sequential estimation of the score function is detailed in \autoref{alg:online_disc}.  

\begin{algorithm}[h!]
\caption{Online Score Function Estimation for a given $l\in\mathbb{N}_0$.} \label{alg:online_disc}
\begin{enumerate}
\item{For $i\in\{1,\dots,N\}$, sample $u_{0}^{l,i}$ i.i.d.~from $\eta_{0,\theta}^l(\cdot)$. The estimate of
$\nabla_{\theta}\log(\gamma_{1,\theta}^l(1))$
%, denoted $\widehat{\nabla_{\theta}\log(\gamma_{1,\theta}^l(1))}$
 is:
%
%\begin{equation}
%\label{eq:grad_est}
$$
\widehat{\nabla_{\theta}\log(\gamma_{1,\theta}^l(1))} := \frac{\sum_{i=1}^N G_{0,\theta}^l(u_{-1}^{l,i},u_{0}^{l,i})~\mu_{0,\theta}^l(u_{-1}^{l,i},u_{0}^{l,i})}{\sum_{i=1}^N G_{0,\theta}^l(u_{-1}^{l,i},u_{0}^{l,i})}
$$
%\end{equation}
with $u_{-1}^{l,i}=x_{*}$ for each $i$.
Set $k=1$, and for $i\in\{1,\dots,N\}$, $\check{u}_{-1}^{l,i}=x_{*}$.}
\item{(Resampling step) For $i\in\{1,\dots,N\}$, sample $\check{u}_{k-1}^{l,i}$ from:
$$
\sum_{i=1}^N\frac{G_{k-1,\theta}^l(\check{u}_{k-2}^{l,i},u_{k-1}^{l,i})}{\sum_{j=1}^N G_{k-1,\theta}^l(\check{u}_{k-2}^{l,j},u_{k-1}^{l,j})}\delta_{\{u_{k-1}^{l,i}\}}(\cdot).
$$
If $k=1$, for $i\in\{1,\dots,N\}$, set $F_{k-1,\theta}^{l,N}(\check{u}_{0}^{l,i})=\mu_{0,\theta}^l(x_{*},\check{u}_{0}^{l,i})$.}
\item{ (Sampling step) For $i\in\{1,\dots,N\}$, sample $u_{k}^{l,i}$ from $M_{\theta}^l(\check{u}_{k-1}^{l,i},\cdot)$. For $i\in\{1,\dots,N\}$, compute:
\begin{equation}
\label{eq:f_rec_def}
F_{k,\theta}^{l,N}(u_{k}^{l,i}) = 
\frac{\sum_{j=1}^N \overline{g}_{k,\theta}^l(\check{x}_k^j,x_{k+\Delta_l}^i)m_{\theta}^l(\check{x}_k^j,x_{k+\Delta_l}^i)\{F_{k-1,\theta}^{l,N}(\check{u}_{k-1}^{l,j})
+ \mu_{k,\theta}^l(\check{u}_{k-1}^{l,j},u_{k}^{l,i})
\}}
{\sum_{j=1}^N \overline{g}_{k,\theta}^l(\check{x}_k^j,x_{k+\Delta_l}^i)m_{\theta}^l(\check{x}_k^j,x_{k+\Delta_l}^i)
}.
\end{equation}
The estimate of
$\nabla_{\theta}\log(\gamma_{k+1,\theta}^l(1))$
%, denoted $\widehat{\nabla_{\theta}\log(\gamma_{k+1,\theta}^l(1))}$
 is:
\begin{equation}
\label{eq:grad_est_1}
\widehat{\nabla_{\theta}\log(\gamma_{k+1,\theta}^l(1))} := 
\frac{\sum_{i=1}^N G_{k,\theta}^l(\check{u}_{k-1}^{l,i},u_{k}^{l,i})F_{k,\theta}^{l,N}(u_{k}^{l,i})}{\sum_{i=1}^N G_{k,\theta}^l(\check{u}_{k-1}^{l,i},u_{k}^{l,i})}.
\end{equation}
Set $k=k+1$ and return to the start of 2.}
\end{enumerate}
\end{algorithm}

The cost of computing the estimation of the score function given in \eqref{eq:grad_est_1} is the cost of the particle filter $\mathcal{O}(\Delta^{-1}_l N)$ plus the cost of computing the term $F_{k,\theta}^{l,N}$ which is $\mathcal{O}(N^2)$, thus, the whole cost is $\mathcal{O}(\Delta^{-1}_l N+N^2)$ per unit time. As pointed out in \cite{beskos}, applying multilevel techniques will not improve the complexity of the score function, therefore, we only apply a single-level particle filter. We note that in \cite{beskos} a further extension using diffusion bridges is possible and could be used in the context of this paper; that method has the advantage of a so-called path-space interpretation. One can also consider the methods in \cite{chopin_Hai-Dang}, although it is unclear on its efficacy for the class of models that we consider.

%Based on the results of \cite{backward} and theorem \ref{prop:coup_main_res}, under appropriate assumptions we have for $(k,N)\in \mathbb{N}^2$: 
%\begin{align*}
%%\label{eq:mse_bound}
%\mathbb{E}_{\theta}\Big[\,\Big\|\,\widehat{\nabla_{\theta}\log(\gamma_{k,\theta}^l(1))}-
%%\frac{\overline{\mathbb{E}}_{\theta}[\lambda_{k,\theta}Z_{k,\theta}|\mathcal{Y}_k]}{\overline{\mathbb{E}}_{\theta}[Z_{k,\theta}|\mathcal{Y}_k]}
%\nabla_{\theta}\log(\gamma_{k,\theta}(1))\, 
%\Big\|_2^2\,\Big] \leq C\Big(\frac{1}{N}+\Delta_l\Big)
%\end{align*}
%where $C$ is constant independent on $l$ and $N$. This implies that the complexity of the score function is of order $\mathcal{O}(\varepsilon^{-4})$ to achieve an MSE of $\mathcal{O}(\varepsilon^2)$. This is because we need to set $\Delta_l=\mathcal{O}(\varepsilon^2)$ and $N^{-1}=\mathcal{O}(\varepsilon^2)$.

\section{Mathematical Analysis}\label{sec:math}

\begin{hypD}\label{hyp_diff:3}
For any $y\in\mathsf{Y}$ there exists a $0<C<+\infty$ such that $\inf_{x\in\mathbb{R}^d}g(x,y)\geq C$.
\end{hypD}

A sketch proof of the below result can be found in the Appendix. We use the notation $[\pi_t^l-\pi_t^{l-1}](\varphi)=\pi_t^l(\varphi) - \pi_t^{l-1}(\varphi)$.

\begin{prop}\label{prop:coup_main_res1}
Assume (D\ref{hyp_diff:1}-\ref{hyp_diff:3}). Then for any $t\in\mathbb{N}$ there exists a $C<+\infty$ such that for
any $(l,N,\varphi)\in\mathbb{N}^2\times\mathcal{B}_b(\mathbb{R}^{d_x})\cap\textrm{\emph{Lip}}(\mathbb{R}^{d_x})$
$$
\mathbb{E}\left[\Big([\pi_t^l-\pi_t^{l-1}]^N(\varphi)-[\pi_t^l-\pi_t^{l-1}](\varphi)\Big)^2\right] \leq \frac{C(\|\varphi\|+\|\varphi\|_{\textrm{\emph{Lip}}})^2\Delta_l^{1/2}}{N}.
$$
\end{prop}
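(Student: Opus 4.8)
The plan is to follow the Feynman--Kac and coupled-particle-filter analysis of \cite{mlpf,mlpf_nc} (see also \cite{mlpf_new}), adapted to the potential functions $G_p^l$ of Section \ref{sec:disc_filter}. The first reduction is to pass from the self-normalized ratio in \eqref{eq:cpf_est} to fluctuations of the underlying empirical predictors $\check\eta_{p-1}^{l,N}$ and $\check{\bar\eta}_{p-1}^{l-1,N}$ defined in \eqref{eq:pred_cpf}. Writing each normalized estimator as a ratio $\gamma^N(\varphi G)/\gamma^N(G)$ and linearizing the map $(a,b)\mapsto a/b$ around the true normalizing constants, the error splits into numerator and denominator fluctuations; here (D\ref{hyp_diff:3}) together with the two-sided bound on $\lambda$ is essential, since it guarantees that $G_p^l$ is bounded above and below uniformly in $l$, so all denominators are bounded away from $0$ and the linearization constants are controlled uniformly in time. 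After this reduction it suffices to bound, for bounded-and-Lipschitz test functions, the coupled fluctuation of the predictors propagated through the remaining Feynman--Kac operators.

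I would then set up an induction on $p\in\{1,\dots,t\}$ that simultaneously carries two estimates: the single-level bound $\mathbb{E}[(\eta_p^{l,N}(\psi)-\eta_p^l(\psi))^2]\le C\|\psi\|^2/N$, and the coupled-difference bound $\mathbb{E}[([\eta_p^l-\eta_p^{l-1}]^N(\psi)-[\eta_p^l-\eta_p^{l-1}](\psi))^2]\le C(\|\psi\|+\|\psi\|_{\mathrm{Lip}})^2\Delta_l^{1/2}/N$. The telescoping (local-error) decomposition of \cite{delm:2004} writes the global error as a sum over the $p$ filtering steps of centered increments, each being the fluctuation of the step-$p$ empirical measure about its conditional mean given the past, composed with the normalized Feynman--Kac semigroup from $p$ to $t$. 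Under (D\ref{hyp_diff:3}) this semigroup is uniformly mixing, giving time-uniform stability constants so that the errors do not accumulate; this is what lets the final constant depend on $t$ but not on $l$.

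The heart of the matter is the per-step $L_2$ bound on the coupled increment, which must factorize into an $N^{-1}$ Monte Carlo contribution and a $\Delta_l^{1/2}$ coupling contribution. Conditionally on the resampled ancestors the $N$ mutated pairs are independent, so a Marcinkiewicz--Zygmund/martingale inequality supplies the $N^{-1}$ scaling. For the coupling rate one analyzes the two sources of discrepancy between the level-$l$ and level-$(l-1)$ systems: (i) pairs of particles that share a resampled ancestor differ only through the synchronously coupled Euler scheme (\autoref{alg:coup_euler}), and the strong error bound $\mathbb{E}\|X_t^l-X_t^{l-1}\|^2\le C\Delta_l$ \cite{kloeden} together with the Lipschitz hypotheses (D\ref{hyp_diff:1}--\ref{hyp_diff:2}) makes their contribution $\mathcal{O}(\Delta_l)$; (ii) pairs that decouple at the maximal-coupling resampling step (\autoref{alg:max_coup}) contribute $\mathcal{O}(1)$ each, but the expected proportion of such pairs is bounded by the expected total variation $\tfrac12\sum_i|w_{p-1}^{l,i}-\bar w_{p-1}^{l-1,i}|$ between the two normalized weight vectors, which by the Lipschitz property of $G_{p-1}^l$ and the same strong error is $\mathcal{O}(\Delta_l^{1/2})$. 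The second term dominates, which is exactly why the multilevel rate is $\Delta_l^{1/2}$ rather than $\Delta_l$, consistent with \cite{mlpf,jasra_clt}.

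The main obstacle I anticipate is the coupling bookkeeping at the resampling step: one must control how decoupled pairs, once created, propagate forward --- they may re-couple or remain independent at later steps --- while keeping all constants uniform in $t$ and showing the decoupling probability is genuinely $\mathcal{O}(\Delta_l^{1/2})$ rather than merely $o(1)$. This forces the induction to track the joint law of the two systems (not just the marginals), to combine the strong-error control, the Lipschitz bounds on $g$ and $\lambda$, and the Feynman--Kac stability within a single recursion, and to propagate the self-normalization errors through the coupled setting without degrading the $\Delta_l^{1/2}$ factor.
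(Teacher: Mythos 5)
Your proposal is correct and follows essentially the same route as the paper: the paper's proof likewise rests on the Feynman--Kac/coupled-particle-filter machinery --- a martingale-plus-remainder (local-error) decomposition over the $t$ assimilation steps, the conditional Marcinkiewicz--Zygmund inequality supplying the $N^{-1}$ factor, uniform-in-$l$ upper and lower bounds on the potentials $\mathbf{G}_p^l$ (from (D\ref{hyp_diff:3}) and $\lambda\in[\underline{c},\overline{c}]$) to control the self-normalization, and the coupling rates you describe (strong error $\mathcal{O}(\Delta_l)$ for pairs sharing an ancestor, decoupling probability $\mathcal{O}(\Delta_l^{1/2})$ under maximal-coupling resampling) as the source of the $\Delta_l^{1/2}$. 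The only difference is organizational rather than mathematical: the paper proves a corrected version of \cite[Lemma A.1]{jasra} (phrased via fourth moments of the coupled differences) and defers the remaining steps --- including the decoupled-pair bookkeeping that you rightly single out as the crux --- to the proof of \cite[Theorem 4.1]{jasra}, whereas you sketch the self-contained induction in the style of \cite{mlpf}.
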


The following result can be established via Lemma \ref{lem:lem2} and the proofs for Proposition \ref{prop:coup_main_res}. 

\begin{prop}\label{prop:coup_main_res2}
Assume (D\ref{hyp_diff:1}-\ref{hyp_diff:3}). Then for any $t\in\mathbb{N}$ there exists a $C<+\infty$ such that for
any $(l,N,\varphi)\in\mathbb{N}^2\times\mathcal{B}_b(\mathbb{R}^{d_x})\cap\textrm{\emph{Lip}}(\mathbb{R}^{d_x})$
$$
\Big|\mathbb{E}\left[[\pi_t^l-\pi_t^{l-1}]^N(\varphi)-[\pi_t^l-\pi_t^{l-1}](\varphi)\right]\Big| \leq \frac{C(\|\varphi\|+\|\varphi\|_{\textrm{\emph{Lip}}})\Delta_l^{1/4}}{N}.
$$
\end{prop}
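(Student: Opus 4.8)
The plan is to re-use the error decomposition developed in the proof of Proposition~\ref{prop:coup_main_res}, but, instead of squaring and taking expectations (which produced the $L_2$ bound of Proposition~\ref{prop:coup_main_res1}), to take expectations directly and exploit the cancellation of the first-order fluctuation terms. The starting point is to write each of the two self-normalised estimators in \eqref{eq:cpf_est} as a ratio $\gamma_t^{l,N}(\varphi)/\gamma_t^{l,N}(1)$ of the associated unnormalised (many-body Feynman--Kac) estimators, and likewise at level $l-1$; under (D\ref{hyp_diff:1}--\ref{hyp_diff:3}) the normalising constants $\gamma_t^{l,N}(1)$ are bounded below (here assumption (D\ref{hyp_diff:3}) is essential, since it keeps the weights $G_p^l$ bounded below and hence the ratio functionals Lipschitz), so a second-order expansion of the map $x/y$ about the deterministic limit is licit.

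First I would recall that the unnormalised flows are conditionally unbiased, i.e.~the resampling of Algorithm~\ref{alg:max_coup} and the synchronous mutation of Algorithm~\ref{alg:coup_euler} preserve the predictive measures in expectation. Consequently, writing $\widehat{D}:=[\pi_t^l-\pi_t^{l-1}]^N(\varphi)$ and $D:=[\pi_t^l-\pi_t^{l-1}](\varphi)$ and expanding both ratios to second order, the linear terms in $\gamma_t^{\cdot,N}-\gamma_t^{\cdot}$ vanish once $\mathbb{E}$ is applied, by the tower property over the filtration generated by the particle system. What survives is a sum, over the time steps $p\in\{0,\dots,t-1\}$, of second-order remainder terms, each a product of two conditionally-centred fluctuations; each such product is $\mathcal{O}(1/N)$ in magnitude, which is precisely the source of the $1/N$ (rather than $1/\sqrt{N}$) decay claimed.

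The $\Delta_l^{1/4}$ factor is then extracted by Cauchy--Schwarz applied to each surviving second-order term, viewed as the expectation of a product of two centred fluctuations. One factor is the \emph{coupling-difference} fluctuation between the level-$l$ and level-$(l-1)$ systems; its $L_2$ norm is of order $\Delta_l^{1/4}N^{-1/2}$, matching the per-step $L_2$ estimates established in the proof of Proposition~\ref{prop:coup_main_res}, which themselves rest on Lemma~\ref{lem:lem2} (the synchronously-coupled Euler strong-error bound together with the loss incurred by the suboptimal maximal-coupling resampling of Algorithm~\ref{alg:max_coup}). The other factor is an ordinary particle-filter fluctuation of $L_2$ norm $\mathcal{O}(N^{-1/2})$. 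Their product is thus $\mathcal{O}(\Delta_l^{1/4}N^{-1})$, and tracking the dependence on $\varphi$ — the bounded part contributing $\|\varphi\|$ and the Lipschitz differences $\varphi(x_t^l)-\varphi(\bar{x}_t^{l-1})$ contributing $\|\varphi\|_{\textrm{Lip}}$ through Lemma~\ref{lem:lem2} — yields the claimed $C(\|\varphi\|+\|\varphi\|_{\textrm{Lip}})\Delta_l^{1/4}/N$.

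The main obstacle I anticipate is the bookkeeping of the ratio's second-order remainder for a \emph{coupled} pair of filters, ensuring simultaneously that (i) the genuinely first-order terms cancel in expectation and do not contaminate the rate in $N$, and (ii) the surviving second-order terms are organised as differences of the two coupled systems so that Lemma~\ref{lem:lem2} applies factor-by-factor. In particular one must be careful that it is the suboptimality of Algorithm~\ref{alg:max_coup} that degrades the natural strong rate $\Delta_l$ to $\Delta_l^{1/2}$ in $L_2$ (and hence to $\Delta_l^{1/4}$ here); conflating the two would produce an over-optimistic exponent. Time-uniformity is not needed, since the constant may depend on $t$, which simplifies the stability estimates relative to the infinite-horizon setting.
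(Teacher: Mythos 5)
Your proposal is correct, and it reaches the bound by a genuinely different organisation of the argument than the paper, even though the decisive mechanism is the same. The paper stays at the level of the normalised (predictor) empirical measures of the coupled pair: it applies the per-time-step martingale-plus-remainder decomposition of \cite[Lemma 6.3]{ddj2012}, the conditionally centred martingale increments vanish under expectation, and the surviving remainder terms \eqref{eq:t3p}--\eqref{eq:t5p} --- each a product of two fluctuations --- are bounded by Cauchy--Schwarz using the $L_2$ inputs of Lemma~\ref{lem:lem1} and Remark~\ref{rem:rem1}; this yields the per-step bias bound of Lemma~\ref{lem:lem2}, from which the proposition follows via the standard ratio decomposition, as in \cite{jasra}. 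You instead pass to unnormalised Feynman--Kac estimators and expand the ratio to second order, so that the linear terms cancel \emph{exactly} by the lack-of-bias property of the unnormalised particle estimator --- valid here because, as you note, the maximal-coupling resampling and synchronous mutation preserve each marginal law, making each level of the CPF a bona fide particle filter \cite{delm:2004}; the quadratic remainder is then treated by the identical Cauchy--Schwarz splitting, $\mathcal{O}(\Delta_l^{1/4}N^{-1/2})$ from the coupled factor times $\mathcal{O}(N^{-1/2})$ from the plain factor. Your route dispenses with the recursive bias lemma entirely; its price is extending the $L_2$ coupling estimates from empirical measures to the unnormalised products, a telescoping step that is routine since the potentials $\mathbf{G}_p^l$ are bounded above and below uniformly in $l$ and $t$ is fixed. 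Two points of care: the $L_2$ inputs you need are those of Lemma~\ref{lem:lem1}/Remark~\ref{rem:rem1} (equivalently Proposition~\ref{prop:coup_main_res1}), not Lemma~\ref{lem:lem2}, which is the per-step bias bound your argument replaces; and when differencing the two quadratic remainders across levels, every term must remain a product of two centred fluctuations (each carrying $N^{-1/2}$), with deterministic weak-error differences of size $\mathcal{O}(\Delta_l)$ entering only as coefficients of such products --- then they contribute $\mathcal{O}(\Delta_l N^{-1})$ and are harmless, whereas a stray $\Delta_l N^{-1/2}$ term would violate the claimed bound, which must hold for all pairs $(l,N)$.
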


Given the above two results, one can show that for the MLPF estimator in \eqref{eq:ml_est}, that the associated MSE (when centering by $\pi_t(\varphi)$) is, for $\epsilon>0$ given, of $\mathcal{O}(\epsilon^2)$ with a computational effort of
$\mathcal{O}(\epsilon^{-2.5})$, with a selection of $N_l=\mathcal{O}(\epsilon^{-2.5}\Delta_l^{3/4})$ with $L$ chosen so that $\mathcal{O}(\Delta_L)=\mathcal{O}(\epsilon)$.
This can be inferred by using standard arguments such as in \cite{mlpf}.

In addition, one has the following result which follows directly from the above results and the proofs in \cite[Theorem 2]{ub_pf}.

\begin{prop}\label{prop:coup_main_res}
Assume (D\ref{hyp_diff:1}-\ref{hyp_diff:3}). Then for any $(t,\varphi)\in\mathbb{N}\cap\mathcal{B}_b(\mathbb{R}^{d_x})\cap\textrm{\emph{Lip}}(\mathbb{R}^{d_x})$ there exist choices of  $\mathbb{P}(p|l)$ and $\mathbb{P}_L(l)$ so that 
\eqref{eq:ub_est} with $\Xi_L$ (resp.~$\Xi_0$) as in \eqref{eq:xi_l} (resp.~\eqref{eq:xi_0}) is an unbiased and finite variance estimator of $\eta_t(\varphi)$.
\end{prop}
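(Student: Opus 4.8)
The plan is to verify that Propositions \ref{prop:coup_main_res1} and \ref{prop:coup_main_res2} supply precisely the two rate hypotheses required by the double-randomization framework of \cite[Theorem 2]{ub_pf}, and then to assemble the two-layer randomization estimate. Write $S_l^p := [\widetilde{\pi}_t^{l}-\widetilde{\pi}_t^{l-1}]^{N_{0:p}}(\varphi)$ for $l\geq 1$ and $S_0^p := \widetilde{\pi}_t^{0,N_{0:p}}(\varphi)$, so that $\Xi_l = (S_l^p-S_l^{p-1})/\mathbb{P}(p|l)$ with $p$ sampled from $\mathbb{P}_p(\cdot|l)$ as in \eqref{eq:xi_l}, \eqref{eq:xi_0}, and $\widehat{\pi_t(\varphi)}=\Xi_L/\mathbb{P}_L(L)$ with $L\sim\mathbb{P}_L$ as in \eqref{eq:ub_est}. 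Conditioning successively on $L$ and on the inner index gives the two identities $\mathbb{E}[\widehat{\pi_t(\varphi)}]=\sum_{l\geq 0}\mathbb{E}[\Xi_l]$ and $\mathbb{E}[\widehat{\pi_t(\varphi)}^2]=\sum_{l\geq 0}\mathbb{P}_L(l)^{-1}\mathbb{E}[\Xi_l^2]$, with $\mathbb{E}[\Xi_l^2]=\sum_{p\geq 0}\mathbb{P}(p|l)^{-1}\mathbb{E}[(S_l^p-S_l^{p-1})^2]$; the whole proposition reduces to controlling these sums and choosing the two mass functions.

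For unbiasedness I would first treat the inner layer. Since $N_p\uparrow\infty$, the pooled (coupled) particle estimators are consistent, so $\mathbb{E}[S_l^p]\to[\pi_t^l-\pi_t^{l-1}](\varphi)$ as $p\to\infty$, the bias rate being furnished by Proposition \ref{prop:coup_main_res2}. Proposition \ref{prop:coup_main_res1}, applied to the pooled estimator of effective size $N_p$, gives $\|S_l^p-S_l^{p-1}\|_2\leq C(\|\varphi\|+\|\varphi\|_{\mathrm{Lip}})\Delta_l^{1/4}N_{p-1}^{-1/2}$ via the triangle inequality (each term compared to the common limit $[\pi_t^l-\pi_t^{l-1}](\varphi)$), whence $\sum_p\mathbb{E}|S_l^p-S_l^{p-1}|<\infty$ and the telescoping $\mathbb{E}[\Xi_l]=\sum_p\mathbb{E}[S_l^p-S_l^{p-1}]=[\pi_t^l-\pi_t^{l-1}](\varphi)$ is legitimate by dominated convergence. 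The outer layer then telescopes, $\sum_{l\geq 0}[\pi_t^l-\pi_t^{l-1}](\varphi)=\lim_{L\to\infty}\pi_t^L(\varphi)=\pi_t(\varphi)$, where the limit is Proposition \ref{prop:weak_error}; the interchange is justified once $\sum_l\mathbb{E}|\Xi_l|<\infty$, which follows from the variance bound below by Cauchy--Schwarz.

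For finite variance I would bound $\mathbb{E}[(S_l^p-S_l^{p-1})^2]\leq C(\|\varphi\|+\|\varphi\|_{\mathrm{Lip}})^2\Delta_l^{1/2}N_{p-1}^{-1}$ from Proposition \ref{prop:coup_main_res1}. Taking $N_p=N_0\,2^p$ and $\mathbb{P}(p|l)$ with tails heavier than $2^{-p}$, say $\mathbb{P}(p|l)\propto 2^{-p/2}$ independently of $l$, makes $\sum_p N_{p-1}^{-1}\mathbb{P}(p|l)^{-1}<\infty$, so that $\mathbb{E}[\Xi_l^2]\leq C'\Delta_l^{1/2}$. Choosing $\mathbb{P}_L(l)\propto\Delta_l^{a}=2^{-la}$ for any fixed $0<a<1/2$ then yields $\sum_l\mathbb{P}_L(l)^{-1}\mathbb{E}[\Xi_l^2]\leq C''\sum_l\Delta_l^{1/2-a}<\infty$, which is exactly the summability condition stated after \eqref{eq:ub_est}. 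This simultaneously gives $\mathbb{E}|\Xi_l|\leq(\mathbb{E}[\Xi_l^2])^{1/2}\leq C'''\Delta_l^{1/4}$, supplying the absolute convergence used for the outer interchange above.

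The step I expect to be the main obstacle is not the bookkeeping of these two geometric sums but the transfer of the single-run bounds of Propositions \ref{prop:coup_main_res1}--\ref{prop:coup_main_res2} to the pooled, weight-averaged empirical measures $\widetilde{\eta}_{t-1}^{l,N_{0:p}}$ and to the ratio estimators $S_l^p$ built from them: one must show that combining the independent sub-runs of sizes $N_q-N_{q-1}$ into an effective $N_p$-particle estimator preserves the $\mathcal{O}(\Delta_l^{1/2}/N)$ second-moment and $\mathcal{O}(\Delta_l^{1/4}/N)$ bias rates, so that the increments $S_l^p-S_l^{p-1}$ genuinely inherit the rate $\mathcal{O}(\Delta_l^{1/2}/N_{p-1})$. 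This is precisely the content carried over from the proofs of \cite[Theorem 2]{ub_pf}. A secondary point worth flagging is that the chosen $\mathbb{P}(p|l)$ yields finite variance but infinite expected cost $\sum_p\mathbb{P}(p|l)N_p$; only finiteness of the variance is claimed here, the cost being finite merely with high probability in the sense of \cite{rhee}.
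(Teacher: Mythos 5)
Your proposal is correct and takes essentially the same route as the paper: the paper's entire proof is the observation that Propositions \ref{prop:coup_main_res1}--\ref{prop:coup_main_res2} supply the rate inputs to the double-randomization argument of \cite[Theorem 2]{ub_pf}, with $\mathbb{P}(p|l)$ and $\mathbb{P}_L$ chosen exactly as in that reference, and your unpacking (the two conditioning identities, the inner and outer telescoping with summability justifying the interchanges, and the deferral of the transfer of single-run bounds to the pooled estimators to \cite{ub_pf}) is precisely the content of that citation. The only cosmetic difference is that your explicit choices $\mathbb{P}(p|l)\propto 2^{-p/2}$ and $\mathbb{P}_L(l)\propto\Delta_l^{a}$, $0<a<1/2$, are simpler than the cost-optimized ones the paper inherits from \cite{ub_pf}, but since the proposition only asserts the existence of valid choices this is immaterial.
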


We note that the choice of  $\mathbb{P}(p|l)$ and $\mathbb{P}_L(l)$  are exactly as \cite{ub_pf} so we do not discuss it here.  A discussion of the cost is the same as \cite[Section 3.2]{ub_pf} and hence omitted.

\section{Numerical Results}\label{sec:numerics}

\subsection{Models}

In the following, we introduce four different diffusion models (through all the numerical examples in this section and the following we set $d_x=1$). Let $X_0=x^*$ be the initial point at time $t=0$, the diffusion processes that will be associated with the hidden Markov process are given through the following stochastic differential equations (SDEs)

\begin{itemize}
	\item Ornstein-Uhlenbeck (OU):
	Let $\nu,$ $\sigma\in \mathbb{R}^+$. For $t\geq 0$, an Ornstein-Uhlenbeck process is defined as
	\begin{align}
	\label{eq:model1}
	dX_{t}=-\nu X_tdt+\sigma dW_t.
	\end{align}
	\item Langevin process: 
	Let $\pi$ be a probability density function; the overdamped Langevin SDE is defined as 
	\begin{align*}
	dX_{t}=-\nabla \log\left(\pi( X_t)\right)dt+ dW_t.
	\end{align*}
	The asymptotic distribution of this process is precisely $\pi$. In the numerical experiments we use Student's t-distribution with $\nu=10$ degrees of freedom, thus, $\nabla \log(\pi(x))=-(\nu+1)x(x^2+\nu)^{-1}$. Therefore, we have
\begin{align}
\label{eq:model2}
dX_{t}=(\nu+1)X_t(X_t^2+\nu)^{-1} dt+ dW_t.
\end{align}

	\item Nonlinear diffusion term (NLDT): The SDE of this process is defined as 
	\begin{align}
	\label{eq:model3}
	dX_{t}= \frac{1}{\sqrt{1+X_t^2}} dW_t.
	\end{align}
	\item Geometric Brownian Motion (GBM): Let $\nu\in \mathbb{R}$ and $\sigma\in \mathbb{R}^+$, the GBM is defined by the SDE
	\begin{align}
	\label{eq:model4}
	dX_{t}=\nu X_tdt+\sigma X_t dW_t,
	\end{align}
	with $x^*\geq0$.
\end{itemize}

The likelihood $g(x,\cdot)$ is chosen to be the normal distribution with variance $\Sigma\in \mathbb{R}^{+}$ and mean $x \in \mathbb{R}$. The intensity function is taken as $\lambda(x)=a|x|$, where $a\in \mathbb{R}^+$. The observations and the times at which they are observed are generated from running one realization of each of the SDEs above. The parameters in the models above are chosen so that two criteria are met: the first one is the stability of the particle filter for a feasible level of discretization and a certain number of particles, and the second is the strength of the coupling, i.e., how small is the constant $C$ in Propositions \ref{prop:coup_main_res1} compared to the variance of the single-level particle filter. The code of the simulations is written in Python and it can be downloaded from \href{https://github.com/maabs/Multilevel-for-Diffusions-Observed-via-Marked-Point-Processes}{https://github.com/maabs/Multilevel-for-Diffusions-Observed-via-Marked-Point-Processes}.

\subsection{Multilevel Particle Filter}

In \autoref{fig:ML_comple}, we plot the computational cost of the MLPF and compare it with the cost of the single-level PF. We can clearly see the predicted error-to-cost rates of each algorithm and the reduction in cost when using the MLPF compared to using the single-level PF. We choose the final time $t=100$. The number of particles that we need in order to attain the desired MSE $\varepsilon^2$ in the MLPF is $N_l=\mathcal{O}(\Delta_{l}^{3/4}\varepsilon^{-(2.5)})$, for $l\in \{l_0,l_0+1,\cdots,L\}$, where $l_0\in \mathbb{N}_0$. $L=\mathcal{O}(\log(\varepsilon^{-1}))\in \mathbb{N}$, $L > l_0$, depends on the desired level of error.
\begin{figure}[h!]
\centering
\includegraphics[height=0.3\textwidth]{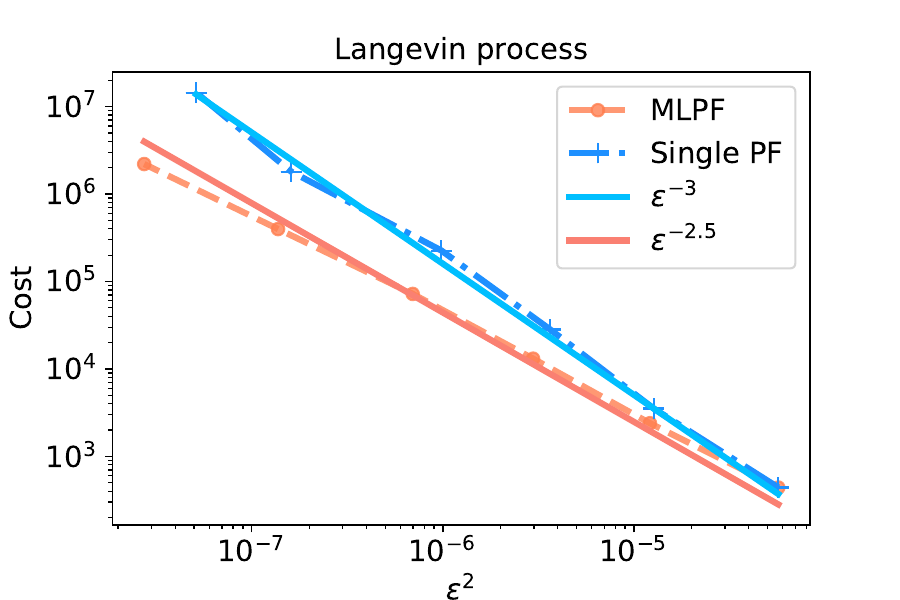}\,
\includegraphics[height=0.3\textwidth]{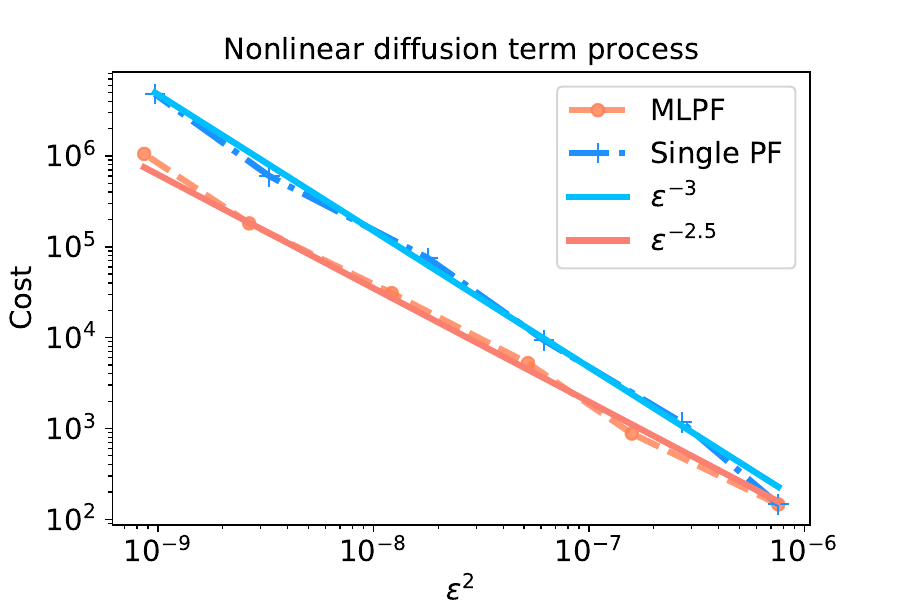}\\
\includegraphics[height=0.3\textwidth]{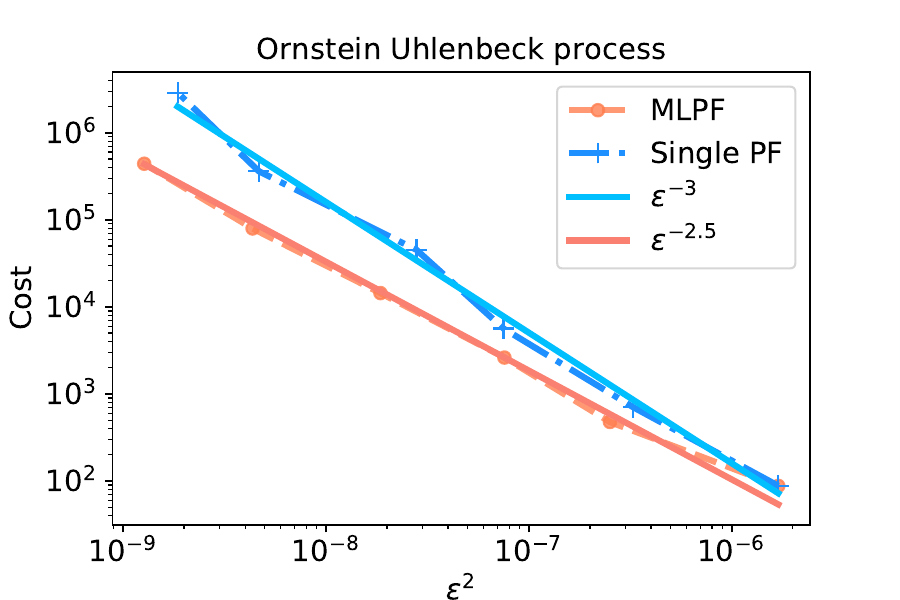}\,
\includegraphics[height=0.3\textwidth]{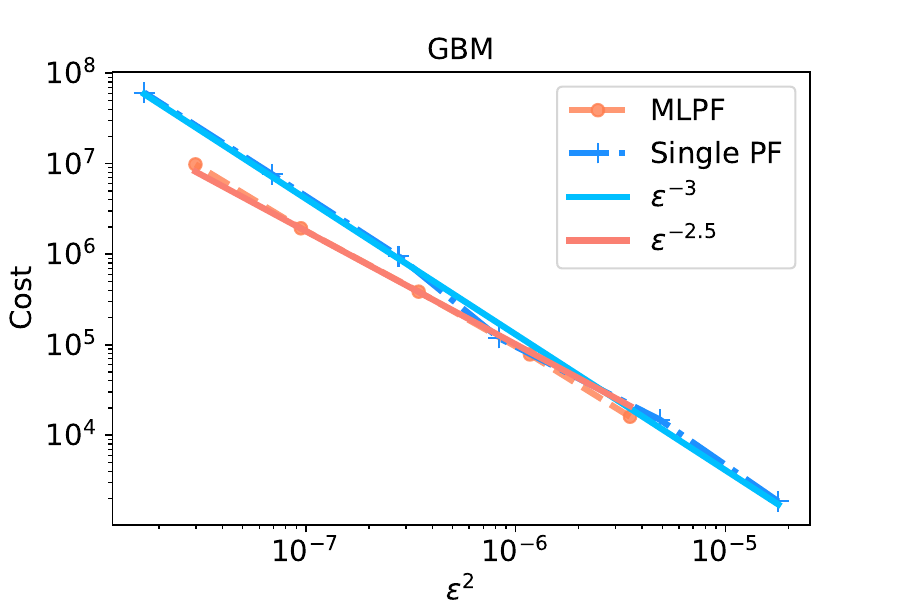}
   \caption{Computional complexity of the multilevel particle filter and the single-level particle filter. On the horizontal line we set MSE $ =\varepsilon^2$.} 
     \label{fig:ML_comple}
\end{figure}

\subsection{Unbiased Particle Filter}
\label{sec:ub_pf_numer}
Similar to the MLPF, we plot the cost vs. MSE of the unbiased estimator in \eqref{eq:ub_est_M} at the final time $t=100$. We use $\mathbb{P}_L(l)\propto \log(l+2)^2(l+1)\Delta^{\frac{1}{2}}_l$ as the probability mass distribution of the discretization level $l$, where $l\in \{l_0,l_0+1,\cdots,L_{\text{Trunc}}\}$. For the number of particles we use a geometric sequence $N_p=N_02^p$, where the randomization of the particles (in terms of $p$) follows the probability mass distribution $\mathbb{P}_P(p)\propto \log(p+2)^2(p+1)/N_p$ for $p\in \{0,1,\cdots, P_{\text{Trunc}} \}$. $L_{\text{Trunc}}$ and $P_{\text{Trunc}}$ are the truncation levels of our now \textit{unbiased} algorithm; we choose them large enough to ensure the bias is negligible w.r.t. the variance of the estimator. For more details on these choices, see \cite{ub_pf}. We use different values of $L_{\text{Trunc}}$, $P_{\text{Trunc}}$ and $N_0$ depending on the diffusion process, for the OU, Langevin and NLDT the values are $L_{\text{Trunc}}=10$, $P_{\text{Trunc}}=11$ and $N_0=5$; for the GBM we have $L_{\text{Trunc}}=10$, $P_{\text{Trunc}}=5$ and $N_0=100$; the value of $l_0=0$ is the same for all the processes.

The computational complexity of the unbiased estimator in \eqref{eq:ub_est_M} is plotted in \autoref{fig:Un_comple}. Due to the random nature of the estimator, we can observe different rates depending on the number of realizations $M$. 
In this Section, we are interested in the rates rather than the comparison to the single-level PF or MLPF; this is because one of the perks of the unbiased estimator comes from its embarrassingly parallel characteristic, thus making the unbiased estimator viable depending on the parallelization capabilities of the hardware. A comparison between the unbiased and multilevel particle filters of partially observed diffusions at regular times can be found in \cite{ub_pf}. 
\begin{figure}[h!]
\centering
\includegraphics[height=0.3\textwidth]{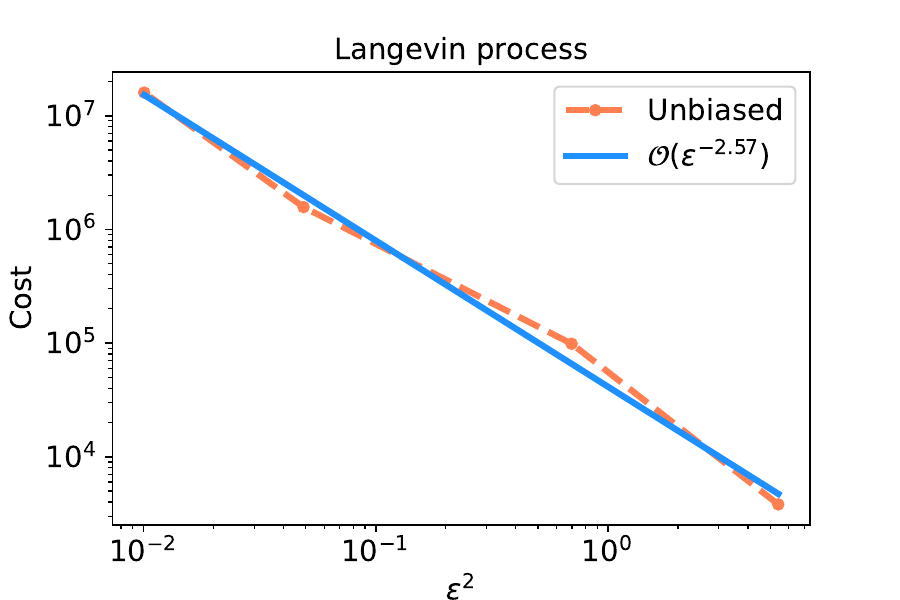}\,
\includegraphics[height=0.3\textwidth]{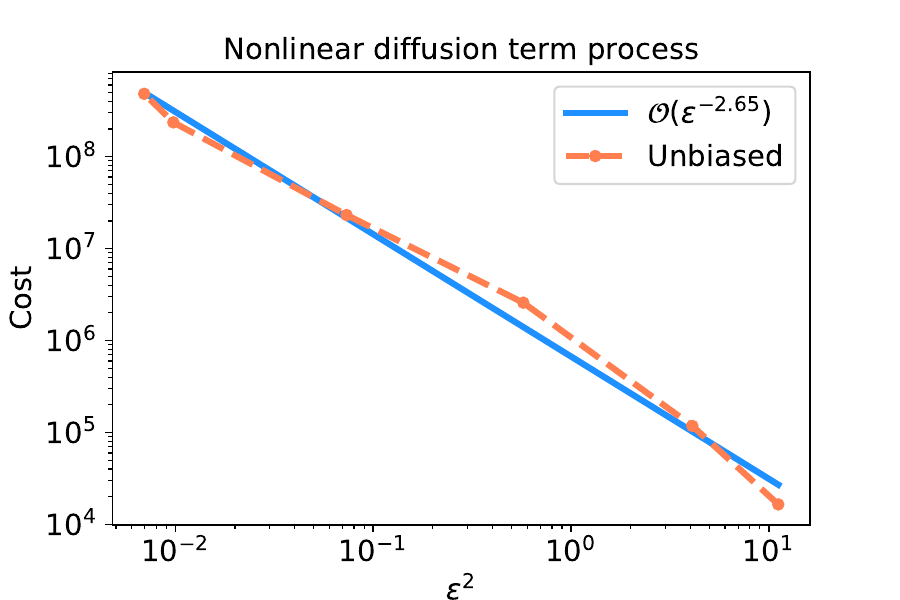}\\
\includegraphics[height=0.3\textwidth]{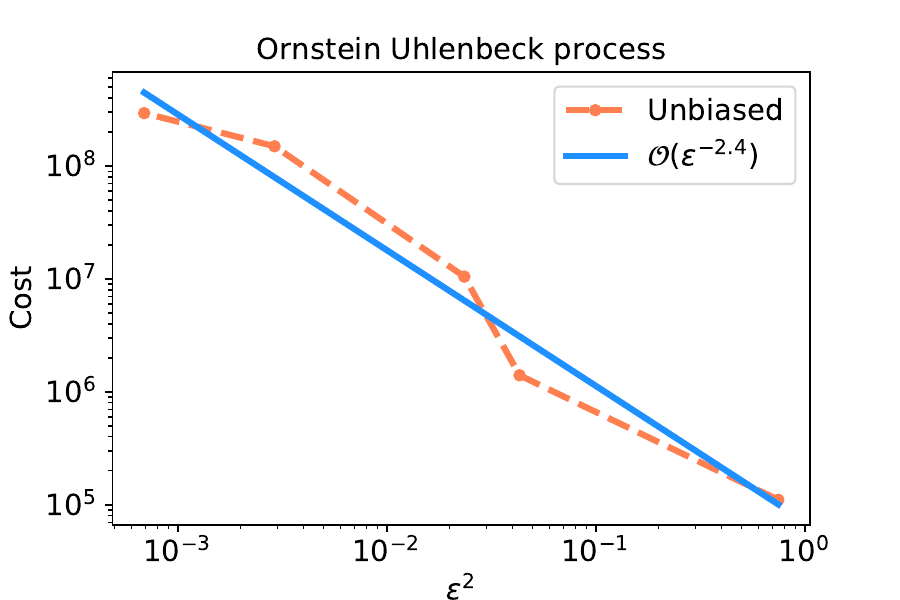}\,
\includegraphics[height=0.3\textwidth]{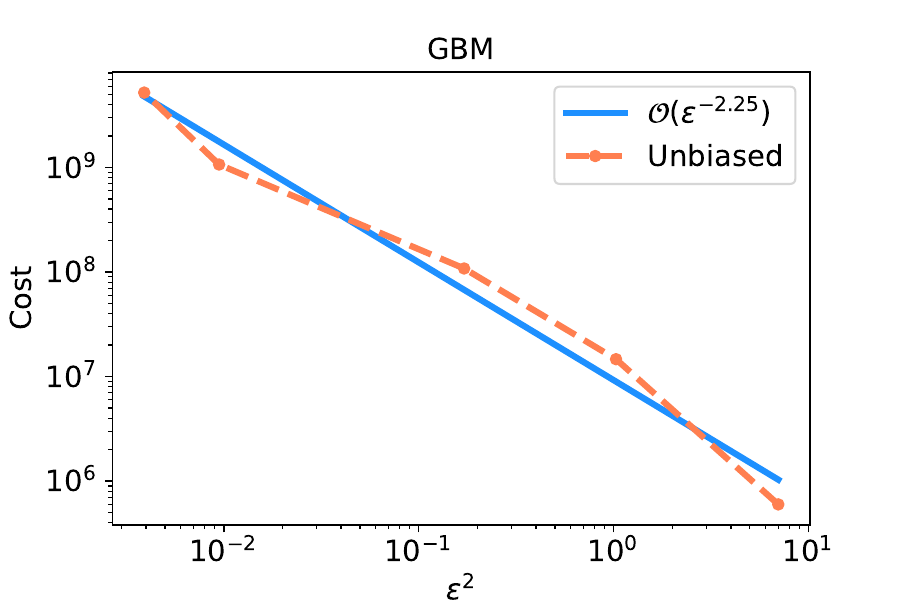}
   \caption{Computational complexity of the unbiased particle filters.} 
     \label{fig:Un_comple}
\end{figure}

\subsection{Parameter Estimation}\label{subsec:par_est}
We use \autoref{alg:online_disc} along with \eqref{eq:sgd} to estimate $\nabla_\theta \log \left(\gamma_{c(k+1), \theta_{0:k}}(1)\right)$ and the parameters $\theta$ for each iteration $k$, specifically, for each $k\geq 0$ we replace $\theta$ by $\theta_k$ in steps 2 and 3 of \autoref{alg:online_disc}. 
In this Section we estimate the parameters for the four different hidden diffusions given above. Also we estimate the parameter $\theta_\lambda\in \mathbb{R}^+$ in the intensity function $\lambda_\theta(x)=\theta_\lambda|x|$, and the and variance parameter $\theta_\Sigma\in \mathbb{R}^+$ in the likelihood function $g_{\theta}(\cdot,y)$, which is defined as a Gaussian density with mean $y$ and variance $\theta_\Sigma$. The hidden processes and their drift terms are: 
\begin{itemize}
\item Ornstein-Uhlenbeck process in \eqref{eq:model1}, with $b_\theta(x)=-\theta_b$ with $\theta_b \in \mathbb{R}^+$. We will estimate $\theta = (\theta_b, \theta_\lambda, \theta_\Sigma)$.
\item Langevin process in \eqref{eq:model2} with $b_{\theta}(x)=(\nu+1)x(\nu+x^2)^{-1}$, $\nu\in \mathbb{R}^+$. We will estimate $\theta = (\theta_\lambda, \theta_\Sigma)$.
\item The nonlinear diffusion process in \eqref{eq:model3} with $b_{\theta}(x)=0$. We will estimate $\theta = (\theta_\lambda, \theta_\Sigma)$.
\item The GBM process in \eqref{eq:model4} with $b_{\theta}(x)=\theta_b x$ with $\theta_b\in \mathbb{R}$. We will estimate $\theta = (\theta_b, \theta_\lambda, \theta_\Sigma)$.
\end{itemize}

In order to control the trade-off between the number of new observations included in the estimation of the score function and the resampling rate of the particle filter, we introduce the hyperparameter $c$ in \eqref{eq:sgd}. Different step-sizes are used depending on the parameter we estimate, i.e., we modify slightly \eqref{eq:sgd} changing $\alpha_m$ from a scalar to a positive definite diagonal matrix in $\mathbb{R}^{d_\theta\times d_\theta}$, with diagonal terms $\alpha_m^{(i)}=\alpha_0^{(i)}(m+1)^{-\beta},$ $\beta \in (0.5,1]$, $\alpha_0^{(i)}\in \mathbb{R}^+$, $i\in\{1,2,3\}$, $m \in \mathbb{N}_0$. Plots of the evolving values of $\theta_m$ are provided in \autoref{fig:sgd}. For the OU process we choose the true parameters to be $(\overline{\theta}_\lambda,\overline{\theta}_\Sigma,\overline{\theta}_b)=(3.5,1,0.98)$ and the initial guesses $({\theta}_{\lambda,0},{\theta}_{\Sigma,0},{\theta}_{b,0})=(1.5,1.5,0.48)$. For the Langevin process the true parameters are $(\overline{\theta}_\lambda,\overline{\theta}_\Sigma)=(1,1)$ and the initial values $({\theta}_{\lambda,0},{\theta}_{\Sigma,0})=(2,2.5)$. For the nonlinear diffusion process the true parameters are $(\overline{\theta}_\lambda,\overline{\theta}_\Sigma)=(0.222,1)$ and the initial values $({\theta}_{\lambda,0},{\theta}_{\Sigma,0})=(2.222,2)$. Finally, for the GBM the true parameters are $(\overline{\theta}_\lambda,\overline{\theta}_\Sigma,\overline{\theta}_b)=(0.5,1,0.015)$ and $({\theta}_{\lambda,0},{\theta}_{\Sigma,0},{\theta}_{b,0})=(2.5,2,1.015)$. We can see in the figures the relatively fast convergence to the true parameters in all models considered.

\begin{figure}[h!]
\centering
\includegraphics[height=0.3\textwidth]{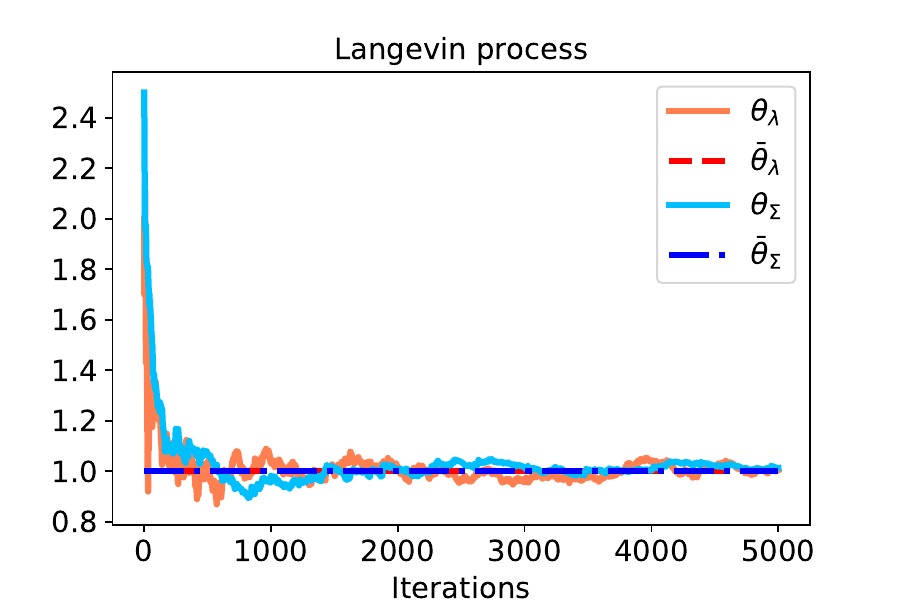}\,
\includegraphics[height=0.3\textwidth]{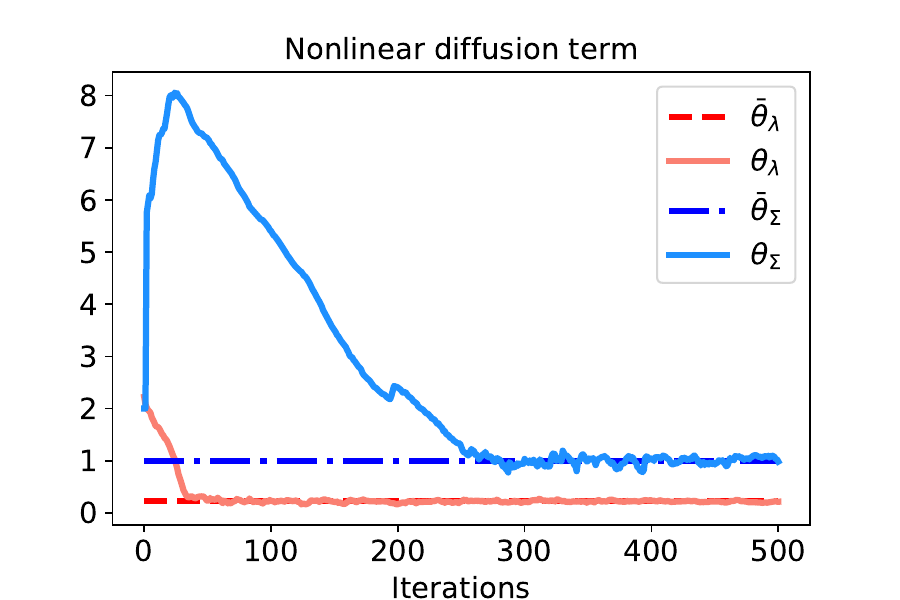}\\
\includegraphics[height=0.3\textwidth]{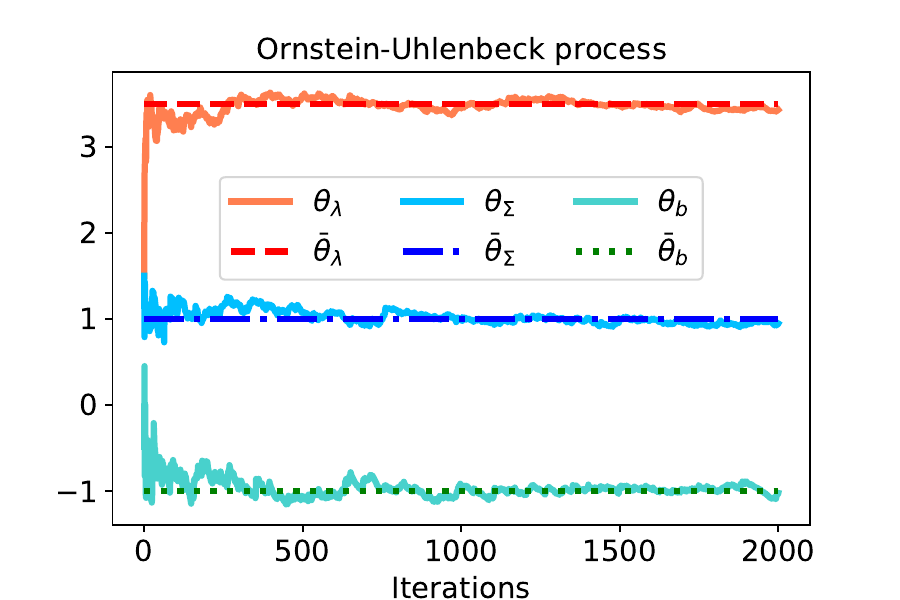}\,
\includegraphics[height=0.3\textwidth]{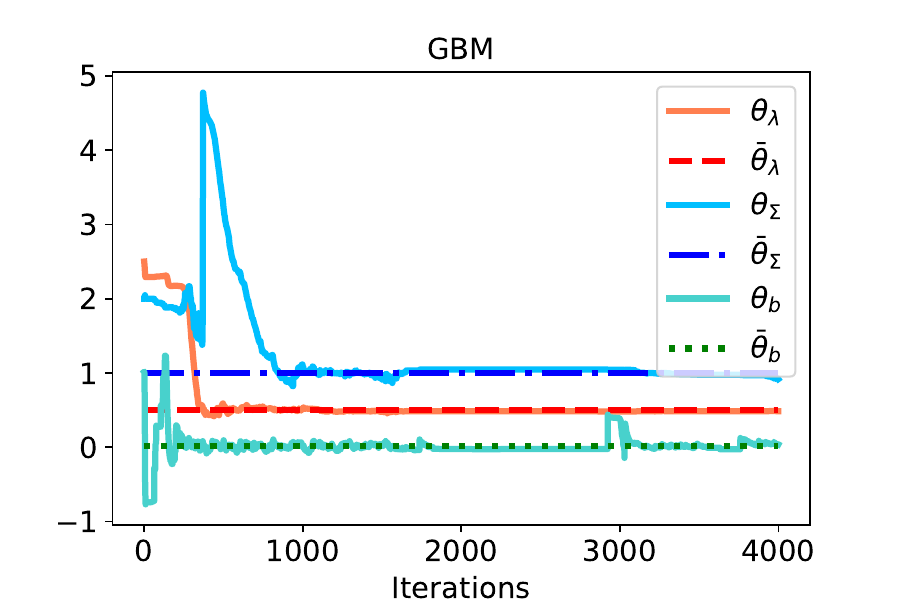}
   \caption{Estimated parameters in terms of the iterations for the four different processes.} 
     \label{fig:sgd}
\end{figure}

%%%%%%%%%%%%%%%%%%%%%%%%%%%%%%%%5%%%%%%%%%%%%%%%%%%%%%%%%%%%%%%%%5%%%%%%%%%%%%%%%%%%%%%%%%%%%%%%%%5
%%%%%%%%%%%%%%%%%%%%%%%%%%%%%%%%5%%%%%%%%%%%%%%%%%%%%%%%%%%%%%%%%5%%%%%%%%%%%%%%%%%%%%%%%%%%%%%%%%5

%%%%%%%%%%%%%%%%%%%%%%%%%%%%%%%%5%%%%%%%%%%%%%%%%%%%%%%%%%%%%%%%%5%%%%%%%%%%%%%%%%%%%%%%%%%%%%%%%%5
%%%%%%%%%%%%%%%%%%%%%%%%%%%%%%%%5%%%%%%%%%%%%%%%%%%%%%%%%%%%%%%%%5%%%%%%%%%%%%%%%%%%%%%%%%%%%%%%%%5

\subsubsection*{Acknowledgements}

All authors were supported by KAUST baseline funding.

\appendix

\section{Mathematical Proofs}

The proof of Proposition \ref{prop:coup_main_res} is virtually identical to that of \cite[Theorem 4.1]{jasra}.
The only issue is that there is minor mistake in the proof of \cite[Lemma A.1.]{jasra} and we correct that here. As the rest  of the proof is then basically as \cite[Theorem 4.1]{jasra} we do not repeat the technical proofs. The reason that the proofs are so similar, is that our hidden process is identical to that of \cite{jasra} and the potential functions $\mathbf{G}_t^l$ (defined below) are uniformly in $l$ upper and lower bounded by deterministic constants as well having similar functional form to the counter-parts in \cite{jasra}. To that end, we give a similar (and corrected) result to \cite[Lemma A.1.]{jasra} and leave the remainder of the proof as a simple exercise in reading and adapting the results in \cite{jasra}.

\subsection{Notations}

To assist moving between the proofs of this paper and that of \cite{jasra} we adopt a similar notations to that paper, which are different from the main text.

For $p\in\mathbb{N}_0$ set
$$
u_p^l := (x_{p},x_{p+\Delta_l},\dots,x_{p+1}) \in (\mathbb{R}^{d_x})^{\Delta_l^{-1}+1} =: E_l
$$
For technical reasons, this notation differs slightly from that in Section \ref{sec:disc_filter}.
  
For $\varphi\in\mathcal{B}_b(\mathbb{R}^{d_x})$, we define, for any $l\geq 0$, $\pmb{\varphi}^l:E_l\rightarrow\mathbb{R}$
$$
\pmb{\varphi}^l(x_0,x_{\Delta_l},\dots,x_{1}) := \varphi(x_1).
$$
Set, with $p\in\mathbb{N}_0$
$$
\mathbf{G}_p^l(u_p^l) := \left(\prod_{k=n_{p}+1}^{n_{p+1} }g(x_{s_k},y_{s_k})\lambda(x_{s_k})\right)\exp\left\{-\Delta_l\sum_{k=0}^{\Delta_l^{-1}-1}\lambda(x_{p+(k+1)\Delta_l})\right\}.
$$
Denote by $M^l:\mathbb{R}^{d_x}\rightarrow\mathcal{P}(E_l)$ the joint Markov transition of $(x_0,x_{\Delta_l},\dots,x_{1})$ defined via the Euler discretization and a Dirac on a point $x\in\mathbb{R}^{d_x}$:  for $(x,\varphi)\in\mathbb{R}^{d_x}\times\mathcal{B}_b(E_l)$, 
$$
M^l(\varphi)(x) := \int_{E_l}\varphi(x_0,x_{\Delta_l},\dots,x_{1})\delta_x(dx_0)\Big[\prod_{k=1}^{\Delta_l^{-1}}m(x_{(k-1)\Delta_l},x_{k\Delta_l})\Big]d(x_{\Delta_l},\dots,x_{1}).
$$
For $p\in\mathbb{N}$, we use the notation 
$$
\mu(\mathbf{G}_{p-1}^l\mathbf{M}^l(\varphi))= \int_{E_l}\mu(d(x_{p-1},x_{p-1+\Delta_l},\dots,x_{p}))\mathbf{G}_{p-1}^l(x_{p-1},x_{p-1+\Delta_l},\dots,x_{p-\Delta_l}, x_p)M^l(\varphi)(x_p)
$$ 
where $(\mu,\varphi)\in\mathcal{P}(E_l)\times\mathcal{B}_b(E_l)$. 
For $p\in\mathbb{N}$, define the operator $\Phi_p^l:\mathcal{P}(E_l)\rightarrow\mathcal{P}(E_l)$ with $(\mu,\varphi)\in\mathcal{P}(E_l)\times\mathcal{B}_b(E_l)$ as:
$$
\Phi_p^l(\mu)(\varphi) :=  \frac{\mu(\mathbf{G}_{p-1}^l\mathbf{M}^l(\varphi))}{\mu(\mathbf{G}_{p-1}^l)}.
$$

Now, define, for $(p,\varphi)\in\mathbb{N}_0\times\mathcal{B}_b(E_l)$, 
$$
\eta_p^l(\varphi) := \int_{\mathbb{R}^{d_x}}\eta_{p}^l(dx) \int_{E_l} M^l(x,du)\varphi(u).
$$
Then one can establish that for $p\in\mathbb{N}$
$$
\eta_p^l(\varphi) = \Phi_p^l(\eta_{p-1}^l)(\varphi).
$$
Moreover, for $(t,\varphi)\in\mathbb{N}\times\mathcal{B}_b(\mathbb{R}^{d_x})$
$$
\pi_t^l(\varphi) = \frac{\eta_{t-1}^l(\mathbf{G}_{t-1}^l\pmb{\varphi}^l)}{\eta_{t-1}^l(\mathbf{G}_{t-1}^l)}
$$
which is the time discretized filter.

Some operators are now defined. Let $(l,p,n)\in\mathbb{N}_0^3$, $n>p$, $(x,\varphi)\in E_l\times\mathcal{B}_b(E_l)$
$$
\mathbf{Q}_{p,n}^l(\varphi)(u_p) := \int \varphi(u_n)\Big(\prod_{q=p}^{n-1} \mathbf{G}_q^l(u_q)\Big) \prod_{q=p+1}^{n}M^l(u_{q-1},du_q).
$$
where we use the convention $\mathbf{Q}_{p,p}^l(\varphi)(u_p)=\varphi(u_p)$. 
In addition, for $(l,p,n)\in\mathbb{N}_0^3$, $n>p$, $(x,\varphi)\in E_l\times\mathcal{B}_b(E_l)$:
$$
\mathbf{D}_{p,n}^l(\varphi)(u_p) := \frac{\mathbf{Q}_{p,n}^l(\varphi-\eta_n^l(\varphi))(u_p)}{\pi_p^l(\mathbf{Q}_{p,n}^l(1))}
$$
where $\mathbf{D}_{p,p}^l(\varphi)(u_p)=\varphi(u_p)-\eta_p^l(\varphi)$.
Throughout our arguments, $C$ is a finite constant whose value may change from line to line, but does not depend upon $l$ nor $N$. The particular dependencies of a given constant will be clear from the statement of a given result. 

Set, for $l\in\mathbb{N}$, $(n,p,\varphi)\in\mathbb{N}_0^2\times\mathcal{B}_b(\mathbb{R}^{d_x})$, $p< n$
\begin{align*}
T_{p,n}^{l}(\varphi)  :=  \mathbb{E}[(\mathbf{D}_{p,n}^l(\mathbf{G}_n^l\pmb{\varphi}^l)(U_p^{l,1})- \mathbf{D}_{p,n}^{l-1}(\mathbf{G}_n^{l-1}\pmb{\varphi}^{l-1})(\bar{U}_p^{l-1,1}))^4]^{1/2} + 
\|\varphi\|^2\mathbb{E}[(\mathbf{G}_{p}^l(U_p^{l,1})-\mathbf{G}_{p}^{l-1}(\bar{U}_p^{l-1,1}))^4]^{1/2} + \Delta_l^2
\end{align*}
and if $p=n$
$$
T_{p,n}^{l}(\varphi) :=  \mathbb{E}[(\mathbf{D}_{p,n}^l(\mathbf{G}_n^l\pmb{\varphi}^l)(U_p^{l,1})- \mathbf{D}_{p,n}^{l-1}(\mathbf{G}_n^{l-1}\pmb{\varphi}^{l-1})(\bar{U}_p^{l-1,1}))^4]^{1/2} + \Delta_l^2.
$$

We remark that using these modified operators in the analysis of the multilevel and unbiased methods do not change the algorithms and are just an alternative representation. In the notation just introduced, we use $\check{\eta}_t^{l,N}$ and $\check{\bar{\eta}}_t^{l-1,N}$ to denote the $N-$empirical measures from the coupled particle filter at levels $l$ and $l-1$ associated to the samples $u_{n}^{l,1:N}$ and $\bar{u}_{n}^{l-1,1:N}$.

\subsection{Technical Results}

\begin{lem}\label{lem:lem1}
Assume (D\ref{hyp_diff:1}-\ref{hyp_diff:3}). Then for any $n\in\mathbb{N}$ there exists a $C<+\infty$ such that for
any $(l,N,\varphi)\in\mathbb{N}^2\times\mathcal{B}_b(\mathbb{R}^{d_x})$
$$
\mathbb{E}\left[\left(\check{\eta}_{n}^{l,N}(\mathbf{G}_{n}^l\pmb{\varphi}^l)-\check{\bar{\eta}}_{n}^{l-1,N}(\mathbf{G}_{n}^{l-1}\pmb{\varphi}^{l-1})
- \{\eta_{n}^{l}(\mathbf{G}_{n}^l\pmb{\varphi}^l) - \eta_{n}^{l-1}(\mathbf{G}_{n}^{l-1}\pmb{\varphi}^{l-1})
\}
\right)^2\right] \leq \frac{C}{N}\sum_{p=0}^n T_{p,n}^l(\varphi).
$$
\end{lem}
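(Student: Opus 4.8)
The plan is to adapt the Del Moral-type telescoping decomposition of \cite{jasra,mlpf}, carried out simultaneously for the two coupled levels, with the correction located in the treatment of the random normalizing constants. Write $\psi^l:=\mathbf{G}_n^l\pmb{\varphi}^l$ and let $\mathcal{F}_{p-1}^N$ denote the $\sigma$-algebra generated by the coupled particle system up to the resampling at time $p-1$. Using the semigroup property of $\mathbf{Q}_{p,n}^l$ and the definition of $\mathbf{D}_{p,n}^l$, the single-level error $[\check{\eta}_n^{l,N}-\eta_n^l](\psi^l)$ admits a representation as a sum over $p\in\{0,\dots,n\}$ of increments of the form $[\check{\eta}_p^{l,N}-\Phi_p^l(\check{\eta}_{p-1}^{l,N})](\mathbf{D}_{p,n}^l(\psi^l))$, modulo the mismatch between the deterministic normalization $\pi_p^l(\mathbf{Q}_{p,n}^l(1))$ built into $\mathbf{D}_{p,n}^l$ and the empirical normalization produced by the filter. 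Conditional on $\mathcal{F}_{p-1}^N$, the time-$p$ particles are i.i.d.~draws (mutation via \autoref{alg:coup_euler} of the indices selected in \autoref{alg:max_coup}), so each increment is a conditionally centered average of $N$ terms.

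I would then run this decomposition at levels $l$ and $l-1$ on the \emph{same} coupled system and subtract, writing the left-hand quantity $A_n$ of the lemma as a single sum over $p$ of paired increments. Because each paired increment is still conditionally centered given $\mathcal{F}_{p-1}^N$, the cross terms vanish by the tower property, giving $\mathbb{E}[A_n^2]=\sum_{p=0}^n\mathbb{E}[(\cdot)_p^2]$. For fixed $p$, conditioning on $\mathcal{F}_{p-1}^N$ and exploiting the conditional independence of the $N$ coupled particles reduces the conditional second moment to $N^{-1}$ times the second moment of the single coupled-particle contribution $\mathbf{D}_{p,n}^l(\psi^l)(U_p^{l,1})-\mathbf{D}_{p,n}^{l-1}(\psi^{l-1})(\bar{U}_p^{l-1,1})$, the residual from the two coupled resamplings being controlled by the maximal coupling of \autoref{alg:max_coup}.

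The decisive step is to disentangle the fluctuations of the test-function part from those of the random normalizing constants. Writing each paired increment as a sum of a uniformly bounded factor times a level-difference factor --- the boundedness of $\mathbf{G}^l$, $\mathbf{D}_{p,n}^l$ and $\varphi$ following from (D\ref{hyp_diff:1}--\ref{hyp_diff:3}) --- and applying Cauchy--Schwarz turns each second moment of a product into the square root of a fourth moment. This yields exactly the two fourth-moment contributions in $T_{p,n}^l(\varphi)$: the path-difference term $\mathbb{E}[(\mathbf{D}_{p,n}^l(\psi^l)(U_p^{l,1})-\mathbf{D}_{p,n}^{l-1}(\psi^{l-1})(\bar{U}_p^{l-1,1}))^4]^{1/2}$ from the centered semigroup, and the potential-difference term $\|\varphi\|^2\,\mathbb{E}[(\mathbf{G}_p^l(U_p^{l,1})-\mathbf{G}_p^{l-1}(\bar{U}_p^{l-1,1}))^4]^{1/2}$ arising from the difference of the two normalizing constants, with the additive $\Delta_l^2$ absorbing the weak-error mismatch of those constants that is not expressible as a coupled sampling fluctuation. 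Summing over $p$ delivers the stated bound.

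I expect the main obstacle to be precisely this last point, which is where \cite[Lemma A.1]{jasra} is in error: the difference of the random normalizing constants at the two levels must be shown to contribute only at the order $\|\varphi\|^2\mathbb{E}[(\mathbf{G}_p^l-\mathbf{G}_p^{l-1})^4]^{1/2}+\Delta_l^2$, and not at a coarser order that would later destroy the $\Delta_l^{1/2}/N$ rate obtained from $\sum_p T_{p,n}^l(\varphi)$. The orthogonality bookkeeping and the maximal-coupling resampling estimate are otherwise routine and proceed essentially as in \cite{jasra}, so I would carry out the normalizing-constant estimate in full detail and only cite the remainder.
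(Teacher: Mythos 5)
Your plan follows the same route as the paper's proof: the martingale-plus-remainder decomposition run simultaneously at the two coupled levels, conditional independence of the $N$ coupled pairs to extract the $1/N$ factor, Cauchy--Schwarz to convert second moments of products into square roots of fourth moments, and the $\Delta_l^2$ term traced to the weak error of the normalizing constants. You also correctly locate the correction to \cite[Lemma A.1]{jasra} in the treatment of the random normalizing constants.

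However, one step fails as stated. The decomposition of the error $A_n$ is \emph{not} a sum of conditionally centered increments: it is a martingale part, namely the terms
$T_1(p)=(\check{\eta}_p^{l,N}-\Phi_p^l(\check{\eta}_{p-1}^{l,N}))(\mathbf{D}_{p,n}^l(\mathbf{G}_n^l\pmb{\varphi}^l)) - (\check{\bar{\eta}}_p^{l-1,N}-\Phi_p^{l-1}(\check{\bar{\eta}}_{p-1}^{l-1,N}))(\mathbf{D}_{p,n}^{l-1}(\mathbf{G}_n^{l-1}\pmb{\varphi}^{l-1}))$,
\emph{plus} remainder terms coming precisely from the mismatch you mention between the deterministic normalization $\pi_p^l(\mathbf{Q}_{p,n}^l(1))$ inside $\mathbf{D}_{p,n}^l$ and the empirical one (the terms $T_2(p)$ of the paper, split there into \eqref{eq:t3p}--\eqref{eq:t5p}). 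These remainders are nonlinear functionals of the time-$p$ empirical measures --- products such as $[\eta_p^l-\check{\eta}_p^{l,N}](\mathbf{G}_p^l)$ multiplied by random ratios --- and they are not conditionally centered given $\mathcal{F}_{p-1}^N$: indeed $\mathbb{E}\big[[\eta_p^l-\check{\eta}_p^{l,N}](\mathbf{G}_p^l)\mid \mathcal{F}_{p-1}^N\big] = [\eta_p^l-\Phi_p^l(\check{\eta}_{p-1}^{l,N})](\mathbf{G}_p^l)\neq 0$, and in any case the multiplying ratio is itself time-$p$ measurable. Hence the cross terms do not vanish and the claimed identity $\mathbb{E}[A_n^2]=\sum_{p=0}^n\mathbb{E}[(\cdot)_p^2]$ is false. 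The fix is exactly what the paper does: apply the $C_2$-inequality repeatedly to obtain the upper bound $\mathbb{E}[A_n^2]\le C\big(\sum_{p}\mathbb{E}[T_1(p)^2]+\sum_{p}\mathbb{E}[T_2(p)^2]\big)$ --- an inequality, which is all the lemma asserts --- and only then use the conditional Marcinkiewicz--Zygmund inequality on the martingale part and the Cauchy--Schwarz/weak-error arguments on the remainder. With that replacement, the rest of your outline coincides with the paper's proof.
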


\begin{proof}
As in  \cite[Lemma A.1.]{jasra} we apply the following standard Martingale plus remainder decomposition \cite[Lemma 6.3]{ddj2012} followed by the $C_2-$inequality multiple times to yield the upper-bound
$$
\mathbb{E}\left[\left(\check{\eta}_{n}^{l,N}(\mathbf{G}_{n}^l\pmb{\varphi}^l)-\check{\bar{\eta}}_{n}^{l-1,N}(\mathbf{G}_{n}^{l-1}\pmb{\varphi}^{l-1})
- \{\eta_{n}^{l}(\mathbf{G}_{n}^l\pmb{\varphi}^l) - \eta_{n}^{l-1}(\mathbf{G}_{n}^{l-1}\pmb{\varphi}^{l-1})
\}
\right)^2\right]  \leq C\Big(\sum_{p=0}^n\mathbb{E}[T_1(p)^2] + \sum_{p=0}^{n-1}\mathbb{E}[T_2(p)^2]\Big)
$$
where
\begin{eqnarray*}
T_1(p) & := & 
(\check{\eta}_p^{l,N}-\Phi_p^l(\check{\eta}_{p-1}^{l,N}))(\mathbf{D}_{p,n}^l(\mathbf{G}_n^l\pmb{\varphi}^l)) - 
(\check{\bar{\eta}}_p^{l-1,N}-\Phi_p^{l-1}(\check{\bar{\eta}}_{p-1}^{l-1,N}))(\mathbf{D}_{p,n}^{l-1}(\mathbf{G}_n^{l-1}\pmb{\varphi}^{l-1})) \\
T_2(p) & := & 
\left(\frac{\check{\eta}_p^{l,N}(\mathbf{D}_{p,n}^l(\mathbf{G}_n^l\pmb{\varphi}^l))}{\check{\eta}_p^{l,N}(\mathbf{G}_p^l)}[\eta_p^l-\check{\eta}_p^{l,N}](\mathbf{G}_p^l) - 
\frac{\check{\bar{\eta}}_p^{l-1,N}(\mathbf{D}_{p,n}^{l-1}(\mathbf{G}_n^{l-1}\pmb{\varphi}^{l-1}))}{\check{\bar{\eta}}_p^{l-1,N}(\mathbf{G}_p^{l-1})}[\eta_p^{l-1}-\check{\bar{\eta}}_p^{l-1,N}](\mathbf{G}_p^{l-1})\right).
\end{eqnarray*}
We have to control the terms $T_1(p)$, $p\in\{0,1,\dots,n\}$ and  $T_2(p)$, $p\in\{0,1,\dots,n-1\}$ in an appropriate way. 

For $T_1(p)$ applying the conditional Marcinkiewicz-Zygmund inequality followed by Jensen's inequality
\begin{equation}\label{eq:t1p}
\mathbb{E}[T_1(p)^2] \leq \frac{C}{N}\overline{\mathbb{E}}[
(\mathbf{D}_{p,n}^l(\mathbf{G}_n^l\pmb{\varphi}^l)(U_p^{l,1})-\mathbf{D}_{p,n}^{l-1}(\mathbf{G}_n^{l-1}\pmb{\varphi}^{l-1})(\bar{U}_p^{l-1,1}))^4]^{1/2}.
\end{equation}
For $T_2(p)$ we have
$$
T_2(p) = T_3(p) + T_4(p) + T_5(p),
$$
where
\begin{eqnarray}
T_3(p) & := & [\eta_p^l-\check{\eta}_p^{l,N}](\mathbf{G}_p^l)\frac{\check{\eta}_p^{l,N}(\mathbf{D}_{p,n}^l(\mathbf{G}_n^l\pmb{\varphi}^l))}{\check{\eta}_p^{l,N}(\mathbf{G}_p^l)\check{\bar{\eta}}_p^{l-1,N}(\mathbf{G}_p^{l-1})}\Big\{\check{\bar{\eta}}_p^{l-1,N}(\mathbf{G}_p^{l-1})-\check{\eta}_p^{l,N}(\mathbf{G}_p^l)\Big\}\label{eq:t3p}\\
T_4(p) & := & [\eta_p^l-\check{\eta}_p^{l,N}](\mathbf{G}_p^l)\frac{1}{\check{\bar{\eta}}_p^{l-1,N}(\mathbf{G}_p^{l-1})}\Big\{\check{\eta}_p^{l,N}(\mathbf{D}_{p,n}^l(\mathbf{G}_n^l\pmb{\varphi}^l))-\check{\bar{\eta}}_p^{l-1,N}(\mathbf{D}_{p,n}^{l-1}(\mathbf{G}_n^{l-1}\pmb{\varphi}^{l-1}))\Big\}\label{eq:t4p}\\
T_5(p) & := & \frac{\check{\bar{\eta}}_p^{l-1,N}(\mathbf{D}_{p,n}^{l-1}(\mathbf{G}_n^{l-1}\pmb{\varphi}^{l-1}))}{\check{\bar{\eta}}_p^{l-1,N}(\mathbf{G}_p^{l-1})}
\left(
[\eta_p^l-\check{\eta}_p^{l,N}](\mathbf{G}_p^l)
-[\eta_p^{l-1}-\check{\bar{\eta}}_p^{l-1,N}](\mathbf{G}_p^{l-1})
\right).\label{eq:t5p}
\end{eqnarray}
Using the $C_2-$inequality, we need only to bound the second moment of each of the terms $T_3(p)$, $T_4(p)$ and $T_5(p)$ to conclude the proof.

For $T_3(p)$ applying Cauchy-Schwarz twice and using the uniform in $l$ lower-bounds on $\mathbf{G}_p^l$ and $\mathbf{G}_p^{l-1}$ yields that
$$
\mathbb{E}[T_3(p)^2] \leq C\mathbb{E}[|[\eta_p^l-\check{\eta}_p^{l,N}](\mathbf{G}_p^l)|^8]^{1/4}
\mathbb{E}[|\check{\eta}_p^{l,N}(\mathbf{D}_{p,n}^l(\mathbf{G}_n^l\pmb{\varphi}^l))|^8]^{1/4}
\mathbb{E}[|\check{\bar{\eta}}_p^{l-1,N}(\mathbf{G}_p^{l-1})-\check{\eta}_p^{l,N}(\mathbf{G}_p^l)|^{4}]^{1/2}.
$$
The first two terms on the R.H.S.~are, uniformly in $l$, $\mathcal{O}(N^{-1})$ (see e.g.~the proofs in \cite[Appendix A.5]{jasra}) and the last term is upper-bounded by $\mathbb{E}[(\mathbf{G}_{p}^l(U_p^{l,1})-\mathbf{G}_{p}^{l-1}(\bar{U}_p^{l-1,1}))^4]^{1/2}$, which gives
$$
\mathbb{E}[T_3(p)^2] \leq C\frac{\|\varphi\|^2}{N^2}\mathbb{E}[(\mathbf{G}_{p}^l(U_p^{l,1})-\mathbf{G}_{p}^{l-1}(\bar{U}_p^{l-1,1}))^4]^{1/2}.
$$
For $T_4(p)$ applying Cauchy-Schwarz and using the uniform in $l$ lower-bound on $G_p^{l-1}$ gives
$$
\mathbb{E}[T_4(p)^2] \leq C\mathbb{E}[|[\eta_p^l-\check{\eta}_p^{l,N}](\mathbf{G}_p^l)|^4]^{1/2}
\mathbb{E}[|\check{\eta}_p^{l,N}(\mathbf{D}_{p,n}^l(\mathbf{G}_n^l\pmb{\varphi}^l))-\check{\bar{\eta}}_p^{l-1,N}(\mathbf{D}_{p,n}^{l-1}(\mathbf{G}_n^{l-1}\pmb{\varphi}^{l-1}))|^4]^{1/2},
$$
thus, it easily follows that
$$
\mathbb{E}[T_4(p)^2] \leq \frac{C}{N}\mathbb{E}[(\mathbf{D}_{p,n}^l(\mathbf{G}_n^l\pmb{\varphi}^l)(U_p^{l,1})- \mathbf{D}_{p,n}^{l-1}(\mathbf{G}_n^{l-1}\pmb{\varphi}^{l-1})(\bar{U}_p^{l-1,1}))^4]^{1/2}.
$$
One can use similar arguments to the above to establish that
$$
\mathbb{E}[T_5(p)^2] \leq C\frac{\|\varphi\|^2}{N}\left(\mathbb{E}[(\mathbf{G}_{p}^l(U_p^{l,1})-\mathbf{G}_{p}^{l-1}(\bar{U}_p^{l-1,1}))^4]^{1/2} + \Delta_l^2\right)
$$
where the $\Delta_l^2$ term is from the weak error of the filter/predictor (see Proposition \ref{prop:weak_error}).
Therefore, we have shown that
\begin{equation}\label{eq:t2p_bd}
\mathbb{E}[T_2(p)^2] \leq C T_{p,n}^l(\varphi).
\end{equation}
Combining \eqref{eq:t1p} and \eqref{eq:t2p_bd} the proof can easily be concluded.
\end{proof}

\begin{rem}\label{rem:rem1}
Under (D\ref{hyp_diff:1}-\ref{hyp_diff:3}) one can also prove that: for any $(n,q)\in\mathbb{N}^2$, $n\leq q$ there exists a $C<+\infty$ such that for
any $(l,N,\varphi)\in\mathbb{N}^2\times\mathcal{B}_b(\mathbb{R}^{d_x})$
{\small $$
\mathbb{E}\left[\left(\check{\eta}_{n}^{l,N}(\mathbf{D}_{n,q}^l(\mathbf{G}_{q}^l\pmb{\varphi}^l))-\check{\bar{\eta}}_{n}^{l-1,N}(\mathbf{D}_{n,q}^{l-1}(\mathbf{G}_{q}^{l-1}\pmb{\varphi}^{l-1}))
- \{\eta_{n}^{l}(\mathbf{D}_{n,q}^l(\mathbf{G}_{q}^l\pmb{\varphi}^l)) - \eta_{n}^{l-1}(\mathbf{D}_{n,q}^{l-1}(\mathbf{G}_{q}^{l-1}\pmb{\varphi}^{l-1}))
\}
\right)^2\right] \leq \frac{C}{N}\sum_{p=0}^n T_{p,n,q}^l(\varphi),
$$
}
where  for $p< n$
\begin{align*}
T_{p,n,q}^{l}(\varphi) :=  \mathbb{E}[(\mathbf{D}_{p,q}^l(\mathbf{G}_q^l\pmb{\varphi}^l)(U_p^{l,1})- \mathbf{D}_{p,q}^{l-1}(\mathbf{G}_q^{l-1}\pmb{\varphi}^{l-1})(\bar{U}_p^{l-1,1}))^4]^{1/2} + 
\|\varphi\|^2\mathbb{E}[(\mathbf{G}_{p}^l(U_p^{l,1})-\mathbf{G}_{p}^{l-1}(\bar{U}_p^{l-1,1}))^4]^{1/2} + \Delta_l^2
\end{align*}
and if $p=n$
$$
T_{p,n,q}^{l}(\varphi) :=  \mathbb{E}[(\mathbf{D}_{p,q}^l(\mathbf{G}_q^l\pmb{\varphi}^l)(U_p^{l,1})- \mathbf{D}_{p,q}^{l-1}(\mathbf{G}_q^{l-1}\pmb{\varphi}^{l-1})(\bar{U}_p^{l-1,1}))^4]^{1/2} + \Delta_l^2.
$$
\end{rem}

\begin{lem}\label{lem:lem2} 
Assume (D\ref{hyp_diff:1}-\ref{hyp_diff:3}). Then for any $n\in\mathbb{N}$ there exists a $C<+\infty$ such that for
any $(l,N,\varphi)\in\mathbb{N}^2\times\mathcal{B}_b(\mathbb{R}^{d_x})$
$$
\left|\mathbb{E}\left[\check{\eta}_{n}^{l,N}(\mathbf{G}_{n}^l\pmb{\varphi}^l)-\check{\bar{\eta}}_{n}^{l-1,N}(\mathbf{G}_{n}^{l-1}\pmb{\varphi}^{l-1})
- \{\eta_{n}^{l}(\mathbf{G}_{n}^l\pmb{\varphi}^l) - \eta_{n}^{l-1}(\mathbf{G}_{n}^{l-1}\pmb{\varphi}^{l-1})
\}
\right]\right| \leq 
$$
$$
\frac{C}{N}\sum_{p=0}^n\left[\left(\sum_{q=0}^p T_{q,p}^l(\varphi)\right)^{1/2} + 
\left(\sum_{p=0}^n T_{p,n,q}^l(\varphi)\right)^{1/2}
\right].
$$
\end{lem}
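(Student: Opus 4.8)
My plan is to recycle the martingale-plus-remainder decomposition already used in the proof of Lemma \ref{lem:lem1}, but now to extract cancellation \emph{in expectation} so as to gain the full factor $N^{-1}$ rather than the $N^{-1/2}$ that the second moment alone affords. Writing the left-hand error as $\sum_{p=0}^n T_1(p) + \sum_{p=0}^{n-1} T_2(p)$ with $T_1(p)$, $T_2(p)$ exactly as in that proof, the first step is to observe that each martingale increment is centred: conditional on the particles up to time $p-1$, the maximal-coupling resampling of \autoref{alg:max_coup} is marginally multinomial for each cloud, so that $\mathbb{E}[\check{\eta}_p^{l,N}(\psi)\mid\mathcal{F}_{p-1}] = \Phi_p^l(\check{\eta}_{p-1}^{l,N})(\psi)$ (and likewise at level $l-1$); since $\mathbf{D}_{p,n}^l$ is a deterministic function this yields $\mathbb{E}[T_1(p)]=0$ for every $p$. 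Hence the bias collapses to $\sum_{p=0}^{n-1}\mathbb{E}[T_2(p)] = \sum_{p=0}^{n-1}\big(\mathbb{E}[T_3(p)] + \mathbb{E}[T_4(p)] + \mathbb{E}[T_5(p)]\big)$ with $T_3,T_4,T_5$ as in \eqref{eq:t3p}--\eqref{eq:t5p}.

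The second ingredient is that integrals of $\mathbf{D}_{p,n}^l$ are genuine, mean-zero fluctuations. The Feynman--Kac identity $\eta_n^l(\psi)=\eta_p^l(\mathbf{Q}_{p,n}^l\psi)/\eta_p^l(\mathbf{Q}_{p,n}^l 1)$ gives $\eta_p^l(\mathbf{D}_{p,n}^l(\psi))=0$, and the same at level $l-1$, so both $\check{\eta}_p^{l,N}(\mathbf{D}_{p,n}^l(\cdot))$ and the coupled difference $\check{\eta}_p^{l,N}(\mathbf{D}_{p,n}^l(\cdot))-\check{\bar{\eta}}_p^{l-1,N}(\mathbf{D}_{p,n}^{l-1}(\cdot))$ have vanishing expectation. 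With this, each of $T_3(p)$, $T_4(p)$, $T_5(p)$ is a product of two error factors that are each $O(N^{-1/2})$ in $L_2$, multiplied by quantities that are bounded above and bounded below away from zero uniformly in $l$ (using (D\ref{hyp_diff:3}) and the uniform bounds on $\mathbf{G}_p^l$). Applying Cauchy--Schwarz (or Hölder with exponents $(4,4,2)$ for the triple product $T_3$) to pair the two fluctuating factors converts two $O(N^{-1/2})$ bounds into a single $O(N^{-1})$ bound, which is the desired rate.

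The delicate point is to pair the factors so that the coupling (level-difference) smallness is retained and reappears as the $(\sum T)^{1/2}$ factors. For $T_5(p)$ I keep the centred $\mathbf{D}$-ratio as one $O(N^{-1/2})$ fluctuation and pair it against $[\eta_p^l-\check{\eta}_p^{l,N}](\mathbf{G}_p^l)-[\eta_p^{l-1}-\check{\bar{\eta}}_p^{l-1,N}](\mathbf{G}_p^{l-1})$, which is precisely the coupled quantity controlled by Lemma \ref{lem:lem1} at time index $p$ with $\varphi\equiv 1$, giving $C N^{-1/2}\big(\sum_{q=0}^p T_{q,p}^l\big)^{1/2}$ and hence the first term on the right-hand side. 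For $T_4(p)$ I pair the single-level fluctuation $[\eta_p^l-\check{\eta}_p^{l,N}](\mathbf{G}_p^l)$ against the centred coupled $\mathbf{D}$-difference, now controlled by Remark \ref{rem:rem1} at rate $C N^{-1/2}\big(\sum_q T_{q,p,n}^l\big)^{1/2}$, producing the second term. The triple product $T_3(p)$ is handled by pairing the two single-level fluctuations (the $\mathbf{G}$- and $\mathbf{D}$-fluctuations) to manufacture the $N^{-1}$ factor, while the remaining coupled $\mathbf{G}$-difference carries the level smallness; here one uses that $\big(\sum_{q=0}^p T_{q,p}^l\big)^{1/2}\ge\Delta_l$ (because $T_{q,p}^l$ contains a $\Delta_l^2$ summand) to absorb the $O(\Delta_l)$ weak-error part of that factor, again into the first term. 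Summing the three contributions over $p$ by the triangle inequality gives the claimed bound.

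The main obstacle is exactly this bookkeeping: the naive route (kill the martingale, then bound the remainder in $L_2$) only delivers $O(N^{-1/2})$, so one must genuinely exploit that every surviving term is a product of two controllable fluctuations, and at each step take care never to bound a level-difference factor crudely by a constant when its smallness is what yields the $(\sum T)^{1/2}$ factors. The two facts that make the whole scheme work — that maximal-coupling resampling keeps each marginal unbiased (so $\mathbb{E}[T_1(p)]=0$) and that $\mathbf{D}_{p,n}^l$ is $\eta_p^l$-centred — are precisely where the corrected Lemma \ref{lem:lem1} and its companion Remark \ref{rem:rem1} enter, and the remainder of the argument is then a routine adaptation of the corresponding estimates in \cite{jasra}.
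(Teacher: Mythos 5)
Your proposal is correct and takes essentially the same route as the paper: the paper's (very terse) proof likewise applies the martingale-plus-remainder decomposition, drops the conditionally centred martingale part so that the bias reduces to $\sum_{p}\sum_{j=3}^{5}\mathbb{E}[T_j(p)]$, and then bounds these expectations by Cauchy--Schwarz pairings of two $O(N^{-1/2})$ fluctuations in which Lemma \ref{lem:lem1} and Remark \ref{rem:rem1} supply the level-difference factors. Your write-up simply fills in the details the paper omits for brevity, namely the marginal unbiasedness of the maximal-coupling resampling (so $\mathbb{E}[T_1(p)]=0$), the $\eta_p^l$-centring of $\mathbf{D}_{p,n}^l$, and the H\"older pairing with exponents $(4,4,2)$ for the triple product $T_3(p)$.
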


\begin{proof}
Using the Martingale plus remainder decomposition we have that
$$
\left|\mathbb{E}\left[\check{\eta}_{n}^{l,N}(\mathbf{G}_{n}^l\pmb{\varphi}^l)-\check{\bar{\eta}}_{n}^{l-1,N}(\mathbf{G}_{n}^{l-1}\pmb{\varphi}^{l-1})
- \{\eta_{n}^{l}(\mathbf{G}_{n}^l\pmb{\varphi}^l) - \eta_{n}^{l-1}(\mathbf{G}_{n}^{l-1}\pmb{\varphi}^{l-1})
\}
\right]\right| \leq 
\sum_{j=3}^5 \mathbb{E}[T_j(p)]
$$
where the terms $T_{3:5}(p)$ are given in \eqref{eq:t3p}-\eqref{eq:t5p}. The proof essentially follows that of Lemma \ref{lem:lem1} except that one will use Lemma \ref{lem:lem1}  and Remark \ref{rem:rem1} where it is appropriate; as the proof is fairly trivial, therefore, it is omitted for brevity.
\end{proof}

\end{document}